\mathchardef\mhyphen="2D
\newcommand{\spec}{\mbox{\rm spec}}
\begin{document}

\title{Local markers for crystalline topology}
%\title{Local markers for topological crystalline materials}

\author{Alexander Cerjan}
\email[]{awcerja@sandia.gov}
\affiliation{Center for Integrated Nanotechnologies, Sandia National Laboratories, Albuquerque, New Mexico 87185, USA}

\author{Terry A.\ Loring}
%\email[]{loring@math.unm.edu}
\affiliation{Department of Mathematics and Statistics, University of New Mexico, Albuquerque, New Mexico 87131, USA}

\author{Hermann Schulz-Baldes}
\affiliation{FAU Erlangen-N{\" u}rnberg, Department Mathematik, Cauerstr.\ 11, D-91058 Erlangen, Germany}

\date{\today}

\begin{abstract}
Over the last few years, crystalline topology has been used in photonic crystals to realize edge- and corner-localized states that enhance light-matter interactions for potential device applications. However, the band-theoretic approaches currently used to classify bulk topological crystalline phases cannot predict the existence, localization, or spectral isolation of any resulting boundary-localized modes. While interfaces between materials in different crystalline phases must have topological states at some energy, these states need not appear within the band gap, and thus may not be useful for applications. Here, we derive a class of local markers for identifying material topology due to crystalline symmetries, as well as a corresponding measure of topological protection. As our real-space-based approach is inherently local, it immediately reveals the existence and robustness of topological boundary-localized states, yielding a predictive framework for designing topological crystalline heterostructures. Beyond enabling the optimization of device geometries, we anticipate that our framework will also provide a route forward to deriving local markers for other classes of topology that are reliant upon spatial symmetries.
\end{abstract}

\maketitle

The discovery of crystalline-symmetry protected topological phases, such as obstructed atomic limits \cite{bradlyn_topological_2017}, fra\-gile topology \cite{po_fragile_2018,wang_higher_fragile_2019,de_paz_tqc_2019,song_twisted_2020}, and higher-order topology \cite{benalcazar2017quad,benalcazar2017quadPRB,song_densuremath-2-dimensional_2017,xie_higher-order_2021}, has played a prominent role in the development of artificial topological materials. 
Indeed, one of the primary features of such materials is that their geometry can be carefully tailored during fabrication, allowing for exquisite control over a system's spatial symmetries \cite{schulz_topological_2021,maldovan_sound_2013,xue_topological_2022}.
In photonic crystals, the edge- and corner-localized modes that can appear at the interfaces between structures in different topological crystalline phases have been used to realize a wide variety of useful phenomena, such as lasers \cite{zeng_electrically_2020,yang_spin-momentum-locked_2020,shao_high-performance_2020,kim_multipolar_2020,gong_topological_2020,dikopoltsev_topological_2021}, single photon routing \cite{barik_topological_2018,barik_chiral_2020,parappurath_direct_2020,arora_direct_2021,hauff_chiral_2022}, and structures for enhancing harmonic generation \cite{smirnova_third-harmonic_2019,shalaev_robust_2019,ota_photonic_2019,kruk_nonlinear_2021}. Crystalline topology can also be observed in acoustic systems \cite{serra2018observation,ni_observation_2019,xue_acoustic_2019,xue_realization_2019,zhang_second-order_2019,zhang_dimensional_2019,ni_demonstration_2020,xue_observation_2020,zhang_symmetry-protected_2020,peri_experimental_2020}, where it can protect Fano resonances \cite{zangeneh-nejad_fano_2019} and enable robust analog signal processing \cite{zangeneh-nejad_topological_2019}.

However, the existing theoretical framework for identifying crystalline topology poses a substantial challenge for the design of many types of artificial materials seeking to leverage these phases' boundary-localized phenomena. At present, this classification framework is rooted in band theory, and diagnoses a system's topology through the calculation of elementary band representations \cite{bradlyn_topological_2017,cano_building_2018,de_paz_tqc_2019} and symmetry indicators \cite{kruthoff_topological_2017,po_symmetry-based_2017,benalcazar_quantization_2019,watanabe_space_2018,mondragon-shem_robust_2019,velury_topological_2021,gupta_wannier-function_2022,christensen_location_2022,ghorashi_prevalence_2023}, or Wilson loops over a system's Brillouin zone \cite{zak_berrys_1989,alexandradinata_wilson_loop_2014,Wang_2019}.
Yet, the interface between gapped materials that are in different crystalline-symmetry protected topological phases is not guaranteed to exhibit a localized state at the center of their common band gap, or even within this gap at all \cite{vaidya_topo_phases_2023}.
Instead, after designing such a topological heterostructure, the existence, localization, and spectral isolation of any boundary states must be confirmed \textcolor{black}{through additional analysis, such as} large-volume simulations of the interface. Although it is possible to combine crystalline symmetries with chiral or particle-hole symmetry to protect the boundary-localized states' frequencies to be at mid-gap \cite{jung_exact_2021,vaidya_topo_phases_2023}, many artificial materials, including photonic crystals, cannot realize these additional symmetries. Thus, the most salient properties of many artificial crystalline topological materials for enhancing interactions cannot, in general, be predicted or protected by known band-theoretic approaches.

Here, we introduce a class of local markers for identifying the topology of materials due to their crystalline symmetries. These markers are applicable to both first-order and higher-order topology, \textcolor{black}{and changes in them} directly reveal a system's topological boundary-localized states. %yielding a predictive framework for designing topological crystalline heterostructures. 
\textcolor{black}{Moreover, associated with every such marker is a local measure of topological protection, providing a quantitative assessment for the robustness of each boundary-localized state.}
We show how this framework can be applied to realistic photonic crystals to identify topological corner-localized states that are nearly degenerate with surrounding edge modes. \textcolor{black}{Furthermore, by calculating the local measure of protection for disordered versions of this system, we demonstrate that a topological state's robustness can be independent of its spectral separation from the bulk bands, in contrast to the standard assumption that such a state's robustness is determined by this spectral separation.}
%we demonstrate how these markers can be recast to identify edge-localized states. 
Looking forward, our framework may both enable the prediction of devices predicated on this class of topology while inherently accounting for finite system size effects, and yield insights for deriving local markers for other classes of topology that are reliant upon spatial symmetries, such as those found in moir\'{e} systems.

\begin{figure*}
    \centering
    \includegraphics{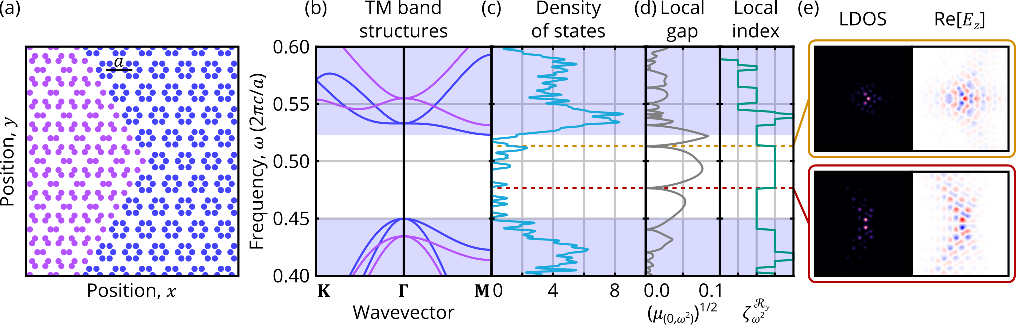}
    \caption{(a) Diagram of a 2D photonic structure with a $120^\circ$ corner between crystals formed from expanded (purple) and contracted (blue) hexagonal clusters, bounded by a perfect electric conductor. The high dielectric rods $\varepsilon = 11.7$ embedded in air have radius $r = a/9$ and are offset from being a honeycomb lattice by $\pm 0.06a$, where $a$ is the lattice constant. (b) Bulk TM band structure for the expanded (purple) and contracted (blue) photonic crystals. (c) Density of states for the finite system in (a). (d) Local gap $(\mu_{(0,\omega^2)})^{1/2}$ in units of $2\pi c/a$ and local index $\zeta_{\omega^2}^{\mathcal{R}_y}$ calculated using $\kappa = 0.01(2\pi c)^2/a^3$. \textcolor{black}{Note, $(\mu_{(0,\omega^2)})^{1/2}$ has units of frequency, enabling direct comparison against the system's band structure.} In (b)-(d) the shaded regions demarcate those frequencies where bulk states exist. (e) Local density of states (LDOS) at the frequency of the local gap closing and real part of the $E_z$ field for the nearest mode of the system. Orange corresponds to $\omega = 0.515(2\pi c/a)$, and red to $\omega = 0.480(2\pi c/a)$.}
    \label{fig:corner}
\end{figure*}

To provide a specific system that exemplifies the difficulties faced in developing artificial topological heterostructures based on crystalline symmetries, we consider a 2D photonic structure consisting of a triangular lattice whose unit cells are decorated with expanded or contracted hexagons of high-dielectric rods \cite{wu_scheme_2015} that are arranged to form an interface with a $120^\circ$ corner between the two decoration choices [Fig.\ \ref{fig:corner}(a)]. The transverse magnetic (TM) modes of these two different decoration choices have been previously shown to be in different topological crystalline phases \cite{wu_scheme_2015,yves_crystalline_2017,gorlach_far-field_2018,smirnova_third-harmonic_2019,kruk_nonlinear_2021}, and can exhibit topological corner-localized states within their common band gap [Fig.\ \ref{fig:corner}]. However, as photonic crystals do not generally possess chiral or particle-hole symmetry, these corner-localized states do not appear at the center of the shared gap (without fine tuning). Thus, even if the topological distinction between the two domains is protected by the bulk band gap, this gap does not protect the localization of the corner states, which could become degenerate with the bulk bands for weaker perturbation strengths than are necessary for a bulk topological phase transition (usually resulting in the delocalization of the corner states \cite{benalcazar_bound_2020,cerjan_observation_2020,cerjan_local_2022}). 

A framework designed to identify the topological interface-localized states stemming from the system's crystalline symmetries requires two components: an invariant that distinguishes topological phases, and an argument showing that shifts in the invariant guarantee the appearance of boundary-localized states. Here, we build such a framework by starting with the spectral localizer \cite{loringPseuspectra,fulga_aperiodic_2016,LoringSchuBa_odd,LoringSchuBa_even,doll2021skew,schulz2021spectral,schulz2022invariants,schulz-baldes_spectral_2023}, which is known to be connected to topology arising from local discrete symmetries (i.e., the Altland-Zirnbauer classes \cite{schnyder2008,kitaev2009,ryu2010topological}). The spectral localizer is a composite operator that combines the eigenvalue problems of a \textcolor{black}{finite} system's Hamiltonian $(H-E{\bf 1})\boldsymbol{\upphi} = 0$ and position operators $(X-x{\bf 1})\boldsymbol{\upphi} = 0$ using a Clifford representation. For a system with a single relevant position operator, the spectral localizer can be written as
\begin{align} %\begin{multline}
    &L_{(x,E)}(X,H) = (H-E{\bf 1}) \otimes \sigma_x+\kappa(X-x{\bf 1}) \otimes \sigma_y \label{eq:Lfull} \\ 
        & =\begin{pmatrix}
        0 & \!\!\!\!(H-E{\bf 1})-i\kappa(X-x{\bf 1})\\
        (H-E{\bf 1})+i\kappa(X-x{\bf 1}) & \!\!\!\! 0
        \end{pmatrix}, \notag
\end{align} %\end{multline}
where the Pauli matrices $\sigma_x$ and $\sigma_y$ are used as the Clifford representation. Here, $\kappa>0$ is a tuning coefficient to ensure consistent units and comparable contributions of all summands, and ${\bf 1}$ is the identity. \textcolor{black}{The approximate scale of $\kappa$ is set by the bulk band gap $g$ and the length of the finite system $l$ in the relevant dimension, $\kappa \approx 2g/l$, see Supplemental Material Sec.~SIII \cite{SI}.} In cases where the matrix arguments are implied by their context, they will be omitted, e.g., $L_{(x,E)}=L_{(x,E)}(X,H)$.
%To compactify notations, we will simply write $L_{(x,E)}=L_{(x,E)}(X,H)$.

Unlike standard eigenvalue equations, where the eigenvalues are determined by their respective operators,  the position $x$ and energy $E$ are inputs in the spectral localizer, and its spectrum quantifies whether the system exhibits a state approximately localized at $(x,E)$ \cite{cerjan_quadratic_2022}, or how large of a system perturbation $\delta H$ is needed to obtain such a state. In particular, if the minimum distance over all of the eigenvalues of $L_{(x,E)}$ to $0$, 
\begin{equation}
\mu_{(x,E)}(X,H) = \min(|\spec[{L}_{(x,E)}(X,H)]|), \label{eq:mu}
\end{equation}
is small relative to $\Vert [H,\kappa X] \Vert$, such an approximately localized state exists \cite{cerjan_quadratic_2022}. Here, $\spec[L]$ denotes the spectrum of $L$. Conversely, if $\mu_{(x,E)}$ is large, a perturbation with norm $\Vert \delta H \Vert \gtrapprox \mu_{(x,E)}(X,H)$ is required for such a state to be found, i.e., for $\mu_{(x,E)}(X,H + \delta H) = 0$. As such, $\mu_{(x,E)}$ can be heuristically understood as a ``local band gap''.

As crystalline symmetries are independent of a system's local discrete symmetries, a crystalline invariant should be applicable to any system regardless of the presence or absence of such discrete symmetries. Thus, a local crystalline topological marker should be given by the signature of an invertible Hermitian matrix, i.e.\ its number of positive eigenvalues minus its number of negative ones. This is analogous to how, for example, the $0$th Chern number of a 0D system is given by the partitioning of its Hamiltonian's eigenvalues about a chosen band gap; the relevant invertible Hermitian matrix here is $H - E_{\textrm{g}}{\bf 1}$, where $E_{\textrm{g}}$ is the band gap's central energy \cite{kitaev2006anyons}. (In contrast, topology originating from phenomena like parity switches are not related to a matrix's signature, but can only manifest in the presence of specific local discrete symmetries.) However, even though $L_{(x,E)}$ is Hermitian, its block off-diagonal structure guarantees that its eigenvalues are always symmetric about $0$ for any choice of $x$ and $E$.

Instead, we seek to remove the duplication in $\spec[{L}_{(x,E)}]$ using the system's crystalline symmetry, and then define an invariant based on this reduced spectrum. In particular, a local crystalline topological marker can be constructed from $L_{(x,E)}$ if the system has a unitary crystalline symmetry $\mathcal{S}$ that satisfies $\mathcal{S}^2 = {\bf 1}$,  $H \mathcal{S} = \mathcal{S} H$, and $X \mathcal{S} = - \mathcal{S} X$. Multiplying the off-diagonal blocks in Eq.\ \eqref{eq:Lfull} by $\mathcal{S}$ at $x=0$ yields the {\it symmetry-reduced spectral localizer} 
\begin{equation}
\tilde{L}_{E}^{\mathcal{S}} (X,H) =(H-E{\bf 1}+i\kappa X) \mathcal{S},
\end{equation}
%\red{also simply be denoted by $\tilde{L}_{E}^{\mathcal{S}}=\tilde{L}_{E}^{\mathcal{S}} (X,H)$.}
(A related operator can be constructed for 1D chiral symmetric systems \cite{cerjan_local_2022,Cheng2023}.)
Remarkably, $\tilde{L}_{E}^{\mathcal{S}} =(\tilde{L}_{E}^{\mathcal{S}})^\dagger$ is Hermitian due to the above symmetry relations. Even though $\tilde{L}_{E}^{\mathcal{S}}$ is only built from a single block of $L_{(0,E)}$, it contains all of the essential spectral information in $L_{(0,E)}$, as 
%\red{H: if space is needed, cut equation}
\begin{equation}
\lambda \in \spec[\tilde{L}_{E}^{\mathcal{S}}(X,H) ]\; \Longrightarrow\; \pm \lambda \in \spec[L_{(0,E)}(X,H)], \label{eq:specLink}
\end{equation}
see Supplemental Material Sec.~SI \cite{SI}.
Hence, $\tilde{L}_{E}^{\mathcal{S}} $ has only ``half of the eigenvalues'' of $L_{(0,E)}$, and these eigenvalues need not lie symmetrically around $0$. Thus, a local crystalline topological marker 
%$\zeta_{E}^{\mathcal{S}}=\zeta_{E}^{\mathcal{S}}(X,H)$ 
can be constructed as
\begin{equation}
\zeta_{E}^{\mathcal{S}}(X,H) = \tfrac{1}{2}\, \textrm{sig}[\tilde{L}_{E}^{\mathcal{S}}(X,H)], \label{eq:zeta}
\end{equation}
where $\textrm{sig}[\tilde{L}_{E}^{\mathcal{S}}]$ is the matrix's signature. For a system with an even/odd number of states, $\zeta_{E}^{\mathcal{S}}$ is integer/half-integer, but the changes in $\zeta_{E}^{\mathcal{S}}$ are always integer-valued (and define the spectral flow, which provides a rigorous generalization to the thermodynamic limit, see Supplemental Material Sec.~SI \cite{SI}). Note that $\zeta_{E}^{\mathcal{S}}$ is only defined when $\tilde{L}_{E}^{\mathcal{S}} $ possesses a spectral gap about $0$, that is, $\mu_{(0,E)} \ne 0$. Moreover, one can prove that pairs $(X,H)$ describing finite systems with the same $\zeta_{E}^{\mathcal{S}}$ can be path-connected to each other while preserving $\mathcal{S}$ and maintaining $\mu_{(0,E)} > 0$, while this is impossible for systems with different $\zeta_{E}^{\mathcal{S}}$, see Supplemental Material Sec.~SII \cite{SI}. 
%As such, $\zeta_{E}^{\mathcal{S}}$ provides a classification of a system's topology relative to $\mathcal{S}$.
% can remove last sentence if needed.

The local marker $\zeta_{E}^{\mathcal{S}}$ is an indicator for topological boundary states. Specifically, $\zeta_{E}^{\mathcal{S}}$ can only change its value at some energy $E_{\textrm{c}}$ if $0 \in \spec[ \tilde{L}_{E_{\textrm{c}}}^{\mathcal{S}} ]$ so that one of the eigenvalues can switch its sign. But, due to Eq.\ \eqref{eq:specLink}, this requires that $\mu_{(0,E_{\textrm{c}})} = 0$. In turn, a vanishing local gap guarantees the existence of an eigenvalue of $H$ near $E_{\textrm{c}}$ whose corresponding eigenstate is centered at $x=0$ \cite{cerjan_quadratic_2022,SI}. If the local gap closing occurs within a bulk band gap, it must correspond to a boundary-localized state. Furthermore, $\mu_{(0,E)} \ne 0$ provides a measure of topological protection, as a perturbation must close the local gap for the topology to change. Thus, altogether, $\zeta_{E}^{\mathcal{S}}$ both distinguishes crystalline topological phases with respect to $\mathcal{S}$ and changes in its value guarantee that the system possesses a topological state. 

To demonstrate that $\zeta_{E}^{\mathcal{S}}$ is a useful invariant for predicting the behavior of artificial topological materials, we apply the spectral localizer framework to the heterostructure considered in Fig.\ \ref{fig:corner}(a). We start with the second-order differential equation form of Maxwell's time-harmonic equations for TM modes 
\begin{equation}
    \nabla^2 E_z(\mathbf{x}) = - \omega^2 \varepsilon(\mathbf{x}) E_z(\mathbf{x}), \label{eq:diffMax}
\end{equation}
in which $E_z(\mathbf{x})$ is the $z$-component of the electromagnetic field, $\omega$ is the frequency, $\varepsilon(\mathbf{x})>0$ is the spatially dependent dielectric distribution, and the magnetic permeability is assumed to be the identity. %The in-plane magnetic field components are defined given a known $E_z(\mathbf{x})$. 
Using standard finite-difference methods to approximate the Laplacian, we obtain the pair of finite matrices $\nabla^2 \rightarrow W$ and $M$, such that $M$ can be diagonal with $[M]_{j,j} = \varepsilon(\mathbf{x}_j)$, where $\mathbf{x}_j = (x_j,y_j)$ is the $j$th vertex in the discretization. Thus, Eq.\ \eqref{eq:diffMax} can be written as the Hermitian eigenvalue problem
\begin{equation}
    (- M^{-1/2} W M^{-1/2} - \omega^2 {\bf 1})\boldsymbol{\uppsi} = 0, \label{eq:matMax}
\end{equation}
where $\boldsymbol{\uppsi} = M^{1/2}E_z$. Note that if $[M,\mathcal{S}] = 0$, then $[M^{-1/2},\mathcal{S}] = 0$. The discretization of the system also defines its position operators, which can also be chosen to be diagonal; for the 2D system considered in Fig.\ \ref{fig:corner}, $[X]_{j,j} = x_j$ and $[Y]_{j,j} = y_j$. Overall, this formulation of Maxwell's equations and the subsequent choice of discretization are chosen to preserve a dielectric distribution's crystalline symmetries.

The heterostructure with a $120^\circ$ corner in its interface that is considered in Fig.\ \ref{fig:corner} possesses a reflection symmetry $\mathcal{R}_y$ about the $y=0$ axis. Thus, as $H \mathcal{R}_y = \mathcal{R}_y H$, $Y \mathcal{R}_y = - \mathcal{R}_y Y$, and $\mathcal{R}_y^2 = {\bf 1}$, this symmetry can be used to define a $\mathcal{R}_y$-symmetrized spectral localizer and associated local marker as %\cite{cerjan_operator_Maxwell_2022}
\begin{equation}
\tilde{L}^{\mathcal{R}_y}_{\omega^2}=(H-\omega^2 {\bf 1}+i\kappa Y) \mathcal{R}_y,
\quad
\zeta_{\omega^2}^{\mathcal{R}_y} = \tfrac{1}{2}\,\textrm{sig}[\tilde{L}^{\mathcal{R}_y}_{\omega^2}], 
    \label{eq:zetaR}
\end{equation}
where $H = -M^{-1/2} W M^{-1/2}$, and the local gap at $y=0$ is given by $\mu_{(0,\omega^2)} = \min(|\spec[\tilde{L}^{\mathcal{R}_y}_{\omega^2}]|)$. Although the system in Fig.\ \ref{fig:corner}(a) is 2D, $\tilde{L}^{\mathcal{R}_y}_{\omega^2}$ and the associated local index and gap only use one of its two position operators (similar to real-space formulations of other weak invariants \cite{schulz2021spectral,schulz2022invariants}). This effectively forces $\zeta_{\omega^2}^{\mathcal{R}_y}$ and $\mu_{(0,\omega^2)}$ to focus on the system's reflection center. 

As can be seen in Fig.\ \ref{fig:corner}(d),(e), the corner heterostructure exhibits two topological corner-localized states within its bulk band gap. Although these states are difficult to uniquely identify in the system's density of states (DOS) due to the surrounding, nearly degenerate edge-localized modes [Fig.\ \ref{fig:corner}(c) and Fig.\ S2], the corner states can be immediately identified using the system's local gap, as they are energetically close to the local gap closing within the heterostructure's bulk band gap and do not come in reflection symmetric pairs (see Supplemental Material Sec.~SVII \cite{SI}). Moreover, at both of these closings the local topological index changes, proving that both corner states are topological with respect to $\mathcal{R}_y$. Finally, the large local gaps on either side of these corner states indicate that they are both robust against fabrication imperfections. \textcolor{black}{Quantitatively similar results appear for a range of $\kappa$, see Supplemental Material Sec.~SIII \cite{SI}.} Note that although there are many local index changes within the spectral extent of the bulk bands, the tiny local gaps within these regions indicate that these topological phases have vanishing protection.

%Calculation of these quantities for this system is shown in Fig.\ \ref{fig:corner}, where $(\mu_{(0,\omega^2)})^{1/2}$ is plotted because it has units of frequency, enabling direct comparison against the system's band structure. 

\textcolor{black}{In contrast to the assumption that crystalline topological states closer to the center of the common bulk band gap are better protected against disorder, the local gap [Fig.\ \ref{fig:corner}(d)] reveals that the higher-frequency corner-localized state in this system has more topological protection than the lower-frequency corner state despite being closer-in-frequency to the bulk bands. In general, a perturbation $\delta H$ with strength $\Vert \delta H \Vert \gtrapprox \mu_{(0,\omega^2)}$ is necessary to change the system's local topology, but this criterion is not sufficient for photonic systems---an arbitrary $H + \delta H$ cannot generally be decomposed into a physically meaningful combination of a local permittivity and Laplacian, per Eq.\ \eqref{eq:matMax}. Instead, the increased protection of the higher-frequency state can be seen by finding the dielectric defect strength necessary to annihilate each of the corner states.
%Nevertheless, the increased protection of the higher-frequency state can be seen by adding a defect to the system and finding the perturbation strengths necessary to annihilate each of the corner states. 
In Fig.\ \ref{fig:defects} we consider two different perturbations that respect $\mathcal{R}_y$, each tailored to affect one corner mode by changing the permittivity of the rod(s) the state has its maximum support on [Fig.~\ref{fig:corner}(e)]. For the lower-frequency corner state's perturbation [Fig.~\ref{fig:defects}(a),(c)], a change in the permittivity of $\delta \varepsilon = 1.17$ is needed to annihilate the topological state by combining it with a state from the lower-frequency bulk bands. In comparison, the necessary perturbation to annihilate the higher-frequency corner state is $\delta \varepsilon = -4.28$  [Fig.~\ref{fig:defects}(b),(d)], despite a similar overlap of the corner-localized state and the perturbation (see Supplemental Material Sec.~SIV \cite{SI}). These results provide evidence that the local gap yields an experimentally relevant hierarchy of protection for a photonic system's topological states.}

\textcolor{black}{The local crystalline topological marker $\zeta_{E}^{\mathcal{S}}$ is also applicable to first-order topology. Section SV in the Supplemental Material provides an example of classifying edge states using this framework \cite{SI}.}

\begin{figure}
    \centering
    \includegraphics{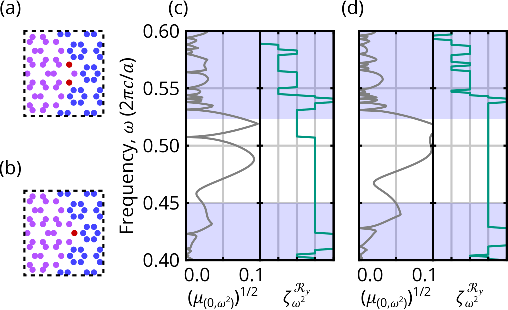}
    \caption{\textcolor{black}{(a),(b) Zoomed in diagram of the perturbed rods (red) in the photonic crystal corner heterostructure from Fig.\ \ref{fig:corner}(a) tailored to the lower-frequency (a) and higher-frequency (b) corner states. (c) Local gap $(\mu_{(0,\omega^2)})^{(1/2)}$ in units of $2\pi c/a$ and local index $\zeta_{\omega^2}^{\mathcal{R}_y}$ calculated using $\kappa = 0.01(2\pi c)^2/a^3$ for the lower-frequency perturbation with $\delta \varepsilon = \varepsilon_{\textrm{red}} - \varepsilon = 1.17$. (d) Similar to (c), except using the higher-frequency perturbation with $\delta \varepsilon = -4.28$. The shaded regions in (c),(d) demarcate those frequencies within the bulk bands.}}
    \label{fig:defects}
\end{figure}

%Re-configuring this structure so that the mirror symmetry axis intersects an edge rather than a corner, shows that the same local marker and local gap definitions can be used to identify topological edge-localized states centered at $y=0$, see Fig.\ \ref{fig:edge}. Indeed, as can be seen, it would be nearly impossible to uniquely identify the topological edge-localized state in this system from its DOS, due to all of the surrounding nearly degenerate states, Fig.\ \ref{fig:edge}b. However, only a single state causes the local gap to close within the heterostructure's bulk band gap, where the topological marker also changes, Fig.\ \ref{fig:edge}c, and this frequency corresponds to a state that is edge-localized, Fig.\ \ref{fig:edge}d. As such, the local crystalline topological marker $\zeta_{E}^{\mathcal{S}}$ is applicable to both first-order and higher-order topology.

Having proved that $\zeta_{E}^{\mathcal{S}}$ is a useful local marker indicating the existence of topological boundary states and that $\mu_{(0,E)}$ is its associated measure of protection, we now provide a physical interpretation for changes in its value. 
%In particular, note the proof showing that systems with different $\zeta_{E}^{\mathcal{S}}$ can only be path continued to one another while preserving $\mathcal{S}$ if the local gap closes somewhere along that path (see \cite{SI}) is not particularly illuminating for identifying what these phases physically are. 
%Instead, to uncover the meaning of $\zeta_{E}^{\mathcal{S}}$, 
Consider the eigenvalue $l_E$ of $\tilde{L}_{E}^{\mathcal{S}}$ closest to zero and its corresponding eigenvector $\boldsymbol{\upphi}_E$, and note the square of the symmetry-reduced spectral localizer,
\begin{equation}
(\tilde{L}_{E}^{\mathcal{S}})^2 = (H-E{\bf 1})^2 - i\kappa [H,X] + \kappa^2 X^2.
\end{equation}
For systems with only local couplings, $\Vert [H,X] \Vert$ is proportional to the lattice constant times the system's energy scale, and as such is of order $1$ in the system's natural units. Thus, as generally $\kappa$ is small (which guarantees the robustness of the spectral localizer to different choices of $\kappa$ \cite{LoringSchuBa_odd}), to leading order one finds that
\begin{equation}
    l^2_E \boldsymbol{\upphi}_E \approx (H-E{\bf 1})^2 \boldsymbol{\upphi}_E.
\end{equation}
Now, let $E_{\textrm{c}}$ be an energy where the local gap closes $\mu_{(0,E_{\textrm{c}})} = 0$, such that $l_{E_{\textrm{c}}} = 0$. If the spectrum of $H$ is non-degenerate (possibly through the addition of a small amount of symmetry-preserving disorder), one finds $\boldsymbol{\upphi}_{E_{\textrm{c}}} \approx \boldsymbol{\uppsi}_{\textrm{b}}$ and $E_{\textrm{c}} \approx E_{\textrm{b}}$, where $H\boldsymbol{\uppsi}_{\textrm{b}} = E_{\textrm{b}} \boldsymbol{\uppsi}_{\textrm{b}}$. Thus, in the vicinity of the local gap closing where there is only a single relevant eigenstate of the Hamiltonian, one finds that
\begin{align}
    l_E \approx s_{\textrm{b}}((E_{\textrm{b}} - E) + i \kappa \boldsymbol{\uppsi}_{\textrm{b}} ^\dagger X \boldsymbol{\uppsi}_{\textrm{b}}),
\end{align}
where $\mathcal{S} \boldsymbol{\uppsi}_{\textrm{b}} = s_{\textrm{b}} \boldsymbol{\uppsi}_{\textrm{b}}$ with $s_{\textrm{b}} = \pm 1$. But, as $X$ anti-commutes with $\mathcal{S}$, $\boldsymbol{\uppsi}_{\textrm{b}}^\dagger X \boldsymbol{\uppsi}_{\textrm{b}}=0$. Altogether, for $E \approx E_{\textrm{c}}$,
\begin{equation}
    l_E \approx -s_{\textrm{b}}(E - E_{\textrm{c}}), \label{eq:lE}
\end{equation}
i.e., the eigenvalue of $\tilde{L}_{E}^{\mathcal{S}}$ closest to zero is linear near a local gap closing, and the change $\pm 1$ of $\zeta^{\mathcal{S}}_E$ across $E_{\textrm{c}}$ is opposite to the symmetry eigenvalue of the Hamiltonian's corresponding topological state. (See Supplemental Material Sec.~SI \cite{SI} for a more detailed derivation of Eq.\ \eqref{eq:lE}.) The prediction of Eq.\ \eqref{eq:lE} is realized in the system from Fig.\ \ref{fig:corner}; simulations show that for the corner-localized state near $\omega = 0.480(2\pi c/a)$, the index $\zeta_{\omega^2}^{\mathcal{R}_y}$ increases by $+1$ (for increasing $\omega$) when the local gap closes, and the corresponding eigenstate of the system is odd ($s_{\textrm{b}} = -1$) with respect to $\mathcal{R}_y$. The opposite behavior is observed for the corner-localized state near $\omega = 0.515(2\pi c/a)$, with $\zeta_{\omega^2}^{\mathcal{R}_y}$ decreasing as the corner-localized eigenstate is even with respect to $\mathcal{R}_y$. Thus, $\zeta_{E}^{\mathcal{S}}$ is identifying atomic limits with different numbers of states that are either even or odd with respect to $\mathcal{S}$.

In conclusion, we have introduced a class of local crystalline topological markers $\zeta^{\mathcal{S}}_{E}$ and their associated measure of topological protection rooted in the spectral localizer. \textcolor{black}{Unlike traditional theories of crystalline topology that only yield $\mathbb{Z}_N$ invariants \cite{benalcazar_quantization_2019,wieder2018axion,peterson_fractional_2020,yuan_filling_2021,vaidya_topo_phases_2023}, the local markers derived here are $\mathbb{Z}$ invariants that can identify multiple topological states per band gap beyond those predicted by a system's fractional filling anomaly (see Supplemental Material Sec.~SVIII \cite{SI}). % integer filling anomalies similar to what has been observed in chiral symmetric systems with higher-order topology \cite{benalcazar_chiral-symmetric_2022,PhysRevLett.131.157201}. 
Thus, further work is required to connect these two frameworks by accounting for the different phenomena each is sensitive to.} More immediately, our operator-based framework should be useful for the design of materials seeking to leverage crystalline topology to enhance interactions \textcolor{black}{by optimizing over the predicted measure of a state's topological robustness}. Furthermore, by providing a physically motivated derivation of our local markers, our work may aid future theoretical studies in finding local markers for other classes of topology, and in particular the topology seen in moir\'{e} systems.

%Unlike band theoretic approaches to crystalline topology, which cannot predict whether a given system will exhibit a boundary-localized state at a given energy within its bulk band gap, our local markers provide precisely this information---both identifying topological states and ignoring other nearly degenerate states while inherently accounting for finite-system size effects. 

\begin{acknowledgments}
A.C.\ and T.L.\ acknowledge support from the Laboratory Directed Research and Development program at Sandia National Laboratories. H.\ S.-B.\ acknowledges support from DFG under Grant No.\ SCHU 1358/8-1. T.L.\ acknowledges support from the National Science Foundation, grant DMS-2110398. A.C.\ acknowledges support from the U.S.\ Department of Energy, Office of Basic Energy Sciences, Division of Materials Sciences and Engineering. This work was performed, in part, at the Center for Integrated Nanotechnologies, an Office of Science User Facility operated for the U.S. Department of Energy (DOE) Office of Science. Sandia National Laboratories is a multimission laboratory managed and operated by National Technology \& Engineering Solutions of Sandia, LLC, a wholly owned subsidiary of Honeywell International, Inc., for the U.S.\ DOE's National Nuclear Security Administration under contract DE-NA-0003525. The views expressed in the article do not necessarily represent the views of the U.S.\ DOE or the United States Government.
\end{acknowledgments}

%\bibliography{loc_for_crys_sym}

%merlin.mbs apsrev4-1.bst 2010-07-25 4.21a (PWD, AO, DPC) hacked
%Control: key (0)
%Control: author (0) dotless jnrlst
%Control: editor formatted (1) identically to author
%Control: production of article title (0) allowed
%Control: page (1) range
%Control: year (0) verbatim
%Control: production of eprint (0) enabled
%

\end{document}

% --- supplement: supplemental.tex ---

\title{Supplemental Material for Local Markers for Crystalline Topology}

\author{Alexander Cerjan}
\email[]{awcerja@sandia.gov}
\affiliation{Center for Integrated Nanotechnologies, Sandia National Laboratories, Albuquerque, New Mexico 87185, USA}

\author{Terry A.\ Loring}
%\email[]{loring@math.unm.edu}
\affiliation{Department of Mathematics and Statistics, University of New Mexico, Albuquerque, New Mexico 87131, USA}

\author{Hermann Schulz-Baldes}
\affiliation{FAU Erlangen-N{\" u}rnberg, Department Mathematik, Cauerstr.\ 11, D-91058 Erlangen, Germany}

%\date{\today}

\maketitle

%%%%%%%%%%%%%%%%%%%%%%%%%%%%%%%%%%%%%%%%%%%%%%%%%%%%%%%%%%%%%%%%%%%%
\section{Properties of the symmetry-reduced spectral localizer \label{sec:smi}}
\label{sec-properties}

At the heart of the local topological invariant framework for crystalline symmetries developed in the main text are the spectral properties of the spectral localizer  (Eq.\ (1) in the main text) which in the present situation is built from two Hermitian operators $X$ and $H$ on a Hilbert space $\mathcal{H}$ (called the position operator and Hamiltonian, respectively): 
\begin{align*} 
    L_{(x,E)}(X,H)
& = (H-E{\bf 1})\otimes \sigma_x+\kappa(X-x{\bf 1}) \otimes \sigma_y
\\
& =       \begin{pmatrix}
        0 & (H-E{\bf 1})-i\kappa(X-x{\bf 1})\\
        (H-E{\bf 1})+i\kappa(X-x{\bf 1}) & 0
        \end{pmatrix}.
\end{align*}
%
Here $\sigma_x$, $\sigma_y$ and $\sigma_z$ are the three Pauli matrices, $x,E$ are real numbers and $\kappa>0$ is a tuning parameter that also guarantees consistent units. In general, one can build a spectral localizer from more Hermitian operators by using a larger-dimensional non-trivial Clifford representation. In order to compactify notations, we will simply drop the arguments $X$ and $H$ and write $L_{(x,E)}=L_{(x,E)}(X,H)$. In the present situation with two Hermitian operators $X$ and $H$ (or more generally an even number of them), the spectral localizer is odd with respect to the symmetry ${\bf 1} \otimes \sigma_z$:
%
$$
({\bf 1} \otimes \sigma_z) L_{(x,E)}({\bf 1} \otimes \sigma_z) = - L_{(x,E)}
\;,
$$
%
so that it is off-diagonal in the grading of the Pauli matrices. Moreover, by construction the spectral localizer $L_{(x,E)}$ is Hermitian. These two properties imply that its spectrum is real and symmetric around $0$. Note also that $L_{(x,E)}$ is invertible if and only if the off-diagonal entries are invertible. If invertibility is given,  the signature $\mbox{\rm sig}[ L_{(x,E)}]$ is well-defined. While $\mbox{\rm sig}[ L_{(x,E)}]$ is useful for identifying material topology in the case of an odd number of Hermitian operators (corresponding to a physical system with even dimensionality) \cite{LoringSchuBa_odd,LoringSchuBa_even}, in the present situation it vanishes due to the symmetry of the spectrum.

\vspace{.2cm}

It is possible, however, to extract important information from the spectral localizer if one has given another (unitary selfadjoint)  symmetry operator $\mathcal{S}$ on $\mathcal{H}$ satisfying, just as in the main text,
%
\begin{equation}
\mathcal{S}^2 = {\bf 1}, \qquad
\mathcal{S} = \mathcal{S}^\dagger, \qquad
\mathcal{S}H = H\mathcal{S}, \qquad
\mathcal{S}X = -X\mathcal{S}. 
\label{eq-SymRel}
\end{equation}
%
In the application in the main text, $\mathcal{S}$ is a crystalline symmetry, more precisely a reflection symmetry, but here we first consider general structural properties of the set-up given by \eqref{eq-SymRel}. The main point is the following conjugation with the unitary $\left(\begin{smallmatrix} \mathcal{S} & 0 \\ 0 & {\bf 1} \end{smallmatrix}\right)$ that can  readily be checked:
\begin{align}
    \begin{pmatrix}
        \mathcal{S} & 0\\
        0 & {\bf 1}
        \end{pmatrix}^\dagger L_{(x,E)}
    \begin{pmatrix}
        \mathcal{S} & 0\\
        0 & {\bf 1}
        \end{pmatrix} 
%        &= \begin{pmatrix}
%       0 & \mathcal{S}^\dagger((H-E{\bf 1})-i\kappa(X-x{\bf 1}))\\
%       ((H-E{\bf 1})+i\kappa(X-x{\bf 1}))\mathcal{S} & 0
%        \end{pmatrix} \notag \\
        &= \begin{pmatrix}
        0 & [((H-E{\bf 1})+i\kappa(X-x{\bf 1}))\mathcal{S}]^\dagger \\
        ((H-E{\bf 1})+i\kappa(X-x{\bf 1}))\mathcal{S} & 0
        \end{pmatrix}. \label{eq:Ltrans}
\end{align}
%
This unitary equivalence shows that the operator on the r.h.s.\ has the same spectrum as $L_{(x,E)}$. In general, the off-diagonal blocks on the r.h.s.\ of Eq.\ \eqref{eq:Ltrans} are non-Hermitian, but the commutation relations \eqref{eq-SymRel} imply that they {\it are} Hermitian at $x = 0$. Hence let us introduce the symmetry-reduced or $\mathcal{S}$-reduced spectral localizer $\tilde{L}^{\mathcal{S}}_{E}=\tilde{L}^{\mathcal{S}}_{E}(X,H)$  by 
%
$$
\tilde{L}^{\mathcal{S}}_{E}
\;=\;
((H-E{\bf 1}) + i\kappa X)\mathcal{S}
\;.
$$
%
The terminology reflects that $\tilde{L}^{\mathcal{S}}_{E}$ depends on the symmetry $\mathcal{S}$ and consists only of one block entry of the spectral localizer, and is hence reduced. However, as stated above, the relations \eqref{eq-SymRel} imply that it is nevertheless Hermitian and clearly one has 
%
\begin{equation}
\label{eq-LLtilde}   
\begin{pmatrix}
        \mathcal{S} & 0\\
        0 & {\bf 1}
        \end{pmatrix}^\dagger L_{(0,E)}
    \begin{pmatrix}
        \mathcal{S} & 0\\
        0 & {\bf 1}
        \end{pmatrix} 
= \begin{pmatrix}
0 & \tilde{L}^{\mathcal{S}}_{E} \\
\tilde{L}^{\mathcal{S}}_{E} & 0
\end{pmatrix}
.
\end{equation}
%
It is possible to rewrite the $\mathcal{S}$-reduced spectral localizer in a form that looks more like in other works. For that purpose, let us split the Hilbert space $\mathcal{H}$ into even and odd elements w.r.t.\ $\mathcal{S}$:
%
$$
\mathcal{H}=\mathcal{H}_+\oplus\mathcal{H}_-,
\qquad
\mathcal{H}_\pm=\big\{\boldsymbol{\uppsi}\in\mathcal{H}\,:\,\mathcal{S}\boldsymbol{\uppsi}=\pm\boldsymbol{\uppsi}\big\}.
$$
%
Now $H:\mathcal{H}_\pm\to \mathcal{H}_\pm$ leaves subspaces invariant and can thus be decomposed as $H=H_+\oplus H_-$. On the other hand, $X:\mathcal{H}_\pm\to \mathcal{H}_\mp$ is off-diagonal in the grading of $\mathcal{S}$ and thus let us use the notation $\tilde{X}=X|_{\mathcal{H}_+}:\mathcal{H}_+\to \mathcal{H}_-$ for the restriction of $X$. Then in the grading of $\mathcal{S}$:
%
\begin{equation}
\label{eq-GradingRep}
\tilde{L}^{\mathcal{S}}_{E}
=
\begin{pmatrix}
(H_+-E{\bf 1}_+) & -i \kappa \tilde{X}^\dagger \\  i \kappa \tilde{X} & -(H_--E{\bf 1}_-)  
\end{pmatrix},
\end{equation}
%
where ${\bf 1}_\pm$ is the identity on $\mathcal{H}_\pm$. This resembles the even spectral localizer in \cite{LoringSchuBa_even}, in particular in the general form of \cite{schulz2021spectral} where also not necessarily all components of the position operator enter into the construction. Another crucial property is that its square
%
\begin{equation}
\label{eq-L^2}
(\tilde{L}^{\mathcal{S}}_{E})^2
\;=\;
(H-E{\bf 1})^2 +i\kappa [X,H]\mathcal{S} + \kappa^2 X^2
\end{equation}
%
only involves the commutator $[X,H]$ and {\it not} a term like $XH+HX$. In applications to local operators, $[X,H]$ is uniformly bounded, but $XH+HX$ grows with the volume. Also note that $i\kappa [X,H]\mathcal{S}$ is a Hermitian operator which is odd w.r.t.\ $\mathcal{S}$, namely $\mathcal{S}^\dagger \big(i\kappa [X,H]\mathcal{S}\big)\mathcal{S}=-\big(i\kappa [X,H]\mathcal{S}\big)$. 

\vspace{.2cm}

For any Hermitian $A$, the spectrum of $\left(\begin{smallmatrix} 0  & A \\ A & 0 & \end{smallmatrix}\right)$  consists of the union of the spectrum of $A$ with its negative because, if $A\psi=\lambda\psi$, the two eigenvectors of $\left(\begin{smallmatrix} 0  & A \\ A & 0 & \end{smallmatrix}\right)$ are simply $\left(\begin{smallmatrix} \pm\psi \\ \psi \end{smallmatrix}\right)$. Thus
%
\begin{equation}
    \lambda \textrm{ or } -\lambda \textrm{ is an eigenvalue of } \tilde{L}^{\mathcal{S}}_{E}  \quad \Longleftrightarrow  \quad \pm \lambda \textrm{ are both eigenvalues of } {L}_{(0,E)}.
\label{eq-EigenRel}
\end{equation}
%
Another piece of spectral information of an arbitrary linear operator $A$ is its invertibility gap $\mu(A)\geq 0$ defined by
%
$$
\mu(A)=
\min\big\{\sqrt{|\lambda|}\,:\,\lambda\in\spec(A^\dagger A)\big\}=\sqrt{\mu(A^\dagger A)}.
$$
%
For matrices, $\mu(A)$ is also called the smallest singular value. If $A^\dagger=A$ is a Hermitian matrix (or more generally $A$ is normal, namely $AA^\dagger=A^\dagger A$), then $\mu(A)$ is the smallest of all absolute values of the eigenvalues of $A$. For the spectral localizer, let us introduce the notation 
%
$$
\mu_{(x,E)} = \mu_{(x,E)}(X,H) =\mu(L_{(x,E)}(X,H) ).
$$
%
Then \eqref{eq-EigenRel} implies
%
\begin{equation}
\label{eq-s_min}
\mu_{(0,E)}=
\mu(L_{(0,E)} )
=
\mu(\tilde{L}^{\mathcal{S}}_{E})
=
\mu\big((H-E{\bf 1}) + i\kappa X\big).
\end{equation}
%
In the following, let us suppose that the Hilbert space $\mathcal{H}$ is finite dimensional which is relevant for the numerical treatment in the main text.  Whenever $\mu_{(0,E)} >0$, one can define
%
$$
\zeta_{E}^{\mathcal{S}}=\tfrac{1}{2}\,\mbox{\rm sig}[\tilde{L}^{\mathcal{S}}_{E} ],
$$
%
which is an integer/half-integer if the dimension of the Hilbert space is even/odd. Changes of the half-signature are always integer and actually define the spectral flow through $0$, see Section~1 in \cite{doll2023spectral}: 
%
\begin{align}
\mbox{Sf}\big(E\in[E_0,E_1]\mapsto \tilde{L}^{\mathcal{S}}_{E}\big)
& =
\tfrac{1}{2}\big(\mbox{\rm sig}[\tilde{L}^{\mathcal{S}}_{E_1} ] -\mbox{\rm sig}[\tilde{L}^{\mathcal{S}}_{E_0} ]\big)
\label{eq-SF}
 =
\zeta_{E_1}^{\mathcal{S}}-\zeta_{E_0}^{\mathcal{S}}
,
\end{align}
%
provided that $\tilde{L}^{\mathcal{S}}_{E_1}$ and $\tilde{L}^{\mathcal{S}}_{E_0}$ are invertible.  Hence it is natural to consider the jump (or discontinuity or critical) points $E_c$ of the function $E\mapsto  \zeta_{E}^{\mathcal{S}}$ which are precisely the points at which there is spectral flow. These points $E_c$ can also be characterized as those points $E_c$ for which $\mbox{\rm Ker}(\tilde{L}^{\mathcal{S}}_{E_c})$ is non-trivial. The following proposition shows that in the vicinity of such jump points $E_c$ there must be an element in the spectrum of $H$.

%%%%%%%%%%%%%%%%%%%%%%%%%%%%%
\begin{prop}
\label{prop-SpectralIndicator}
Suppose $E_c\in \mathbb{R}$ is such that the dimension $m_c=\dim(\mbox{\rm Ker}(\tilde{L}^{\mathcal{S}}_{E_c}))$ is positive. Then $H$ has at least $m_c$ eigenvalues in $[E_c-\sqrt{\kappa\,\|[X,H]\|},E_c+\sqrt{\kappa\,\|[X,H]\|}]$, counted with their multiplicity.
%
\end{prop}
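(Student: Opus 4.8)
The plan is to exploit the explicit form of the square $(\tilde{L}^{\mathcal{S}}_{E_c})^2$ supplied by \eqref{eq-L^2} together with the variational (Courant--Fischer) characterization of eigenvalues. First I would observe that, since $\tilde{L}^{\mathcal{S}}_{E_c}$ is Hermitian, its kernel $V=\mbox{\rm Ker}(\tilde{L}^{\mathcal{S}}_{E_c})$, which has dimension $m_c$, coincides with the kernel of $(\tilde{L}^{\mathcal{S}}_{E_c})^2$. Consequently, for every $\psi\in V$ one has the identity $\langle\psi|(\tilde{L}^{\mathcal{S}}_{E_c})^2|\psi\rangle=0$.

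Next I would insert \eqref{eq-L^2} into this identity. Because $(H-E_c{\bf 1})^2\geq 0$ and $\kappa^2 X^2\geq 0$ are positive semidefinite, whereas $i\kappa[X,H]\mathcal{S}$ is Hermitian with operator norm at most $\kappa\|[X,H]\|$ (using that $\mathcal{S}$ is unitary, so multiplication by $\mathcal{S}$ does not increase the norm), dropping the nonnegative $\kappa^2 X^2$ contribution and estimating the commutator term yields, for every $\psi\in V$,
\begin{equation*}
\langle\psi|(H-E_c{\bf 1})^2|\psi\rangle \;\leq\; \kappa\,\|[X,H]\|\,\|\psi\|^2 .
\end{equation*}
In other words, the quadratic form of the positive operator $(H-E_c{\bf 1})^2$ is bounded by $\kappa\|[X,H]\|$ on the whole $m_c$-dimensional subspace $V$.

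The final step is to apply the min-max principle to the positive Hermitian operator $(H-E_c{\bf 1})^2$. Testing the Courant--Fischer formula for its $m_c$-th smallest eigenvalue on the specific trial subspace $V$ shows that $(H-E_c{\bf 1})^2$ has at least $m_c$ eigenvalues, counted with multiplicity, that are $\leq \kappa\|[X,H]\|$. Since the eigenvalues of $(H-E_c{\bf 1})^2$ are exactly $(\lambda-E_c)^2$ as $\lambda$ runs over the eigenvalues of $H$, it follows that at least $m_c$ eigenvalues $\lambda$ of $H$ obey $(\lambda-E_c)^2\leq \kappa\|[X,H]\|$, i.e.\ $|\lambda-E_c|\leq\sqrt{\kappa\|[X,H]\|}$, which is precisely the asserted interval.

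I expect the only genuine subtlety to be the bookkeeping in the min-max step, namely verifying that the eigenvalue count with multiplicity transfers correctly from $(H-E_c{\bf 1})^2$ back to $H$, rather than any analytic difficulty. The essential structural input — that the square $(\tilde{L}^{\mathcal{S}}_{E_c})^2$ decomposes into two manifestly positive pieces plus the \emph{bounded} commutator term $i\kappa[X,H]\mathcal{S}$, with no volume-growing anticommutator $XH+HX$ — is exactly what makes the volume-independent bound possible and is already provided by \eqref{eq-L^2}.
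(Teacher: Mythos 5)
Your proposal is correct and follows essentially the same route as the paper's own proof: both restrict the identity \eqref{eq-L^2} to the kernel $\mbox{\rm Ker}(\tilde{L}^{\mathcal{S}}_{E_c})$, drop the nonnegative $\kappa^2X^2$ term, bound the Hermitian operator $i\kappa[X,H]\mathcal{S}$ by $\kappa\|[X,H]\|$ in norm, and invoke Courant's min-max principle on $(H-E_c{\bf 1})^2$ over this $m_c$-dimensional trial subspace. Your additional remarks (kernel of a Hermitian matrix equals kernel of its square; multiplicities transfer from $(H-E_c{\bf 1})^2$ back to $H$ by the spectral theorem) simply make explicit steps the paper leaves implicit.
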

%%%%%%%%%%%%%%%%%%%%%%%%%%%%%

\begin{proof}  
By Courant's minmax principle it is sufficient to show that there is an $m_c$-dimensional subspace $\mathcal{E}_c$ of the Hilbert space such that  $\kappa \|[X,H]\| -(H-E_c{\bf 1})^2\geq 0 $ in the sense of positive operators. This will be verified for $\mathcal{E}_c=\mbox{\rm Ker}(\tilde{L}^{\mathcal{S}}_{E_c})$. Indeed, for the restrictions on this subspace one has
%
\begin{align*}
0
& = (\tilde{L}^{\mathcal{S}}_{E_c})^2\big|_{\mathcal{E}_c}
= \big((H-E_c{\bf 1})^2 +i\kappa [X,H]\mathcal{S} + \kappa^2 X^2
 \big)\big|_{\mathcal{E}_c}
\geq
\big((H-E_c{\bf 1})^2 -\kappa \|[X,H]\|
 \big)\big|_{\mathcal{E}_c}\;,
\end{align*}
%
just as required.
\end{proof}

\vspace{.2cm}

Proposition~\ref{prop-SpectralIndicator} does {\it not} exclude that there are further eigenvalues of $H$ in $[E_c-\sqrt{\kappa\,\|[X,H]\|},E_c+\sqrt{\kappa\,\|[X,H]\|}]$. For example, in Fig.~1 in the main there is a kernel of dimension $m
_c$ at $\omega = 0.5152(2\pi c/a)$, but besides an eigenvalue of $H$ at precisely this energy, there are certainly two further eigenvalues of $H$ in a small interval around it (at $\omega = 0.5121(2\pi c/a)$ and $\omega = 0.5125(2\pi c/a)$) that correspond to edge-localized states, see Sec.\ \ref{sec:edge_states}. Nevertheless, only the state at $\omega = 0.5152(2\pi c/a)$ is a corner state, as can readily be seen by looking at the plots in Fig.~\ref{fig:degen_states} of the field intensity $|E_z|^2$ of the two other eigenfunctions. 

\vspace{.2cm}

Let us outline a numerical procedure for how to isolate the topological boundary state corresponding to a jump of $E\mapsto \zeta^{\mathcal{S}}_E$ at $E_c$ by $+1$ and with $m_c=1$ which is the generic case (as higher degeneracies of the kernel are lifted by a generic symmetry-preserving perturbation). This may be of help in simulations at large volume when there are many eigenstates of $H$ close to $E_c$. 

\begin{enumerate}

\item First choose an interval $[E_c-\delta,E_c+\delta]$ for a small $\delta>0$ such that the interval contains an odd number of eigenvalues with normalized eigenstates $\boldsymbol{\uppsi}_1,\ldots,\boldsymbol{\uppsi}_{2n+1}$ of $H$, but no other $E$ with $\mu_{(0,E)}=0$. Order them such that the states with positive $\mathcal{S}$-parity are listed first. There should be $n+1$ states with positive $\mathcal{S}$-parity and $n$ with negative $\mathcal{S}$-parity (if this is not the case, one may have ``just missed'' one state and should slightly enlarge $\delta$; of course, $\delta$ should chosen sufficiently small so that $n$ is not too large). 

\item Then introduce the frame  $\Psi=(\boldsymbol{\uppsi}_1,\ldots,\boldsymbol{\uppsi}_{2n+1})$, which decomposes to $\Psi=(\Psi_+,\Psi_-)$ with $\mathcal{S}\Psi_\pm=\pm\Psi_\pm$. Hence $\Psi_+$ and $\Psi_-$ are of dimension  $n+1$ and $n$ respectively. One of the $n+1$ states in $\Psi_+$ is the topological boundary state. Introduce two diagonal matrices $E_\pm$ (of dimension $n+1$ and $n$) such that $H\Psi_\pm=\Psi_\pm E_\pm$. 

\item Then taking matrix element of \eqref{eq-GradingRep} leads to
%
\begin{equation}
\label{eq-PartialL}
\Psi^\dagger\tilde{L}^{\mathcal{S}}_{E}\Psi=
\begin{pmatrix}
E_+-E{\bf 1}_+ & -i\kappa  (\Psi_+)^\dagger X \Psi_-
\\
i\kappa  (\Psi_-)^\dagger X \Psi_+ & -(E_--E{\bf 1}_-)
\end{pmatrix}
,
\end{equation}
%
where the identities ${\bf 1}_\pm$ are of dimension $n+1$ and $n$. Now compute the $(n+1)\times n$ matrix $(\Psi_+)^\dagger X \Psi_-$ numerically. 

\item There should be one line which is considerably smaller than all others. Shift the corresponding eigenvector into the first row of $\Psi$, then $\boldsymbol{\uppsi}_1=\boldsymbol{\uppsi}_b$ is the desired topological boundary state with eigenvalue $E_b$, namely $H\boldsymbol{\uppsi}_b=E_b\boldsymbol{\uppsi}_b$. Supposing that $(\boldsymbol{\uppsi}_b)^\dagger X \Psi_-$ even vanishes (as it is really negligible compared to the rest), then the first row and first column of $\Psi^\dagger\tilde{L}^{\mathcal{S}}_{E}\Psi$ vanishes, except for the entry $(1,1)$ which is $E_b-E$, and hence this eigenvalue is effectively decoupled from all other states. Of the remaining entries of $(\Psi_+)^\dagger X \Psi_-$ there may be many of order $1$, and these then move the eigenvalues of $\tilde{L}^{\mathcal{S}}_{E}$ out of the kernel, provided that $\kappa$ is larger than the diagonal entries. This also explains why it is {\it not advantageous} to choose $\kappa$ too small, see Sec.\ \ref{sec:kappa} for further discussion.

\item Once the topological boundary state $\boldsymbol{\uppsi}_b$ is determined, one now has
%
$$
(\boldsymbol{\uppsi}_b)^\dagger \tilde{L}^{\mathcal{S}}_{E} \boldsymbol{\uppsi}_b \approx E_b-E +\mathcal{O}(\kappa),
$$
%
where the order $\kappa$ correction term results from the coupling with other states (also those not included in $\Psi$). In particular, the energy $E_c$ with non-trivial kernel of $\tilde{L}^{\mathcal{S}}_{E}$ satisfies
%
$$
E_b-E_c=\mathcal{O}(\kappa),
$$
%
which is considerably better than the estimate of Proposition~\ref{prop-SpectralIndicator} which only provides $E_b-E_c=\mathcal{O}(\sqrt{\kappa})$.

\item If the jump of $\zeta^{\mathcal{S}}_E$ at $E_c$ would be $-1$ (namely the spectral flow would be $-1)$, the same argument as above can be applied, except that $\boldsymbol{\uppsi}_b$ would be of negative $\mathcal{S}$-parity $s_b=-1$. As the sign in the lower right entry in \eqref{eq-PartialL} is different, this leads to a sign change. In both cases, one therefore has (see eq.~(10) in the main text)
%
\begin{equation}
(\boldsymbol{\uppsi}_b)^\dagger \tilde{L}^{\mathcal{S}}_{E} \boldsymbol{\uppsi}_b \approx s_b(E_c-E) +\mathcal{O}(\kappa).
\label{eq-Approx}
\end{equation}

\end{enumerate}

It is possible to add some mathematical rigor to the argument leading to \eqref{eq-Approx}, even in the case of Proposition~\ref{prop-SpectralIndicator} where $m_c$ is larger than $1$, but only for $\kappa$ very small. The reasoning will analyze the dependence of $\tilde{L}^{\mathcal{S}}_{E}$ on both $E$ and $\kappa$, and therefore it will rather be denoted by $\tilde{L}^{\mathcal{S}}_{\kappa,E}$, and similarly $\zeta^{\mathcal{S}}_{\kappa,E}$. The most important property of the matrix-valued  function $(\kappa,E)\in\mathbb{R}^2\mapsto \tilde{L}^{\mathcal{S}}_{\kappa,E}$ is that it is real analytic in both variables with values in the Hermitian matrices. Therefore Kato's analytic perturbation theory applies \cite{kato2013perturbation}, both to $\tilde{L}^{\mathcal{S}}_{\kappa,E}$ as well as $(\tilde{L}^{\mathcal{S}}_{\kappa,E})^2$ given in \eqref{eq-L^2}. In particular, for a given $\delta$ (that will be chosen to be smaller or of the order of $\kappa$), one can look at the spectral projection $P^\delta_{\kappa,E}=\chi(\tilde{L}^{\mathcal{S}}_{\kappa,E}\in(-\delta,\delta))$ of $\tilde{L}^{\mathcal{S}}_{\kappa,E}$ onto the interval $(-\delta,\delta)$. As long as its dimension is constant, is then known to depend analytically on $\kappa$ and $E$. 

%%%%%%%%%%%%%%%%%%%%%%%%%%%%%
\begin{prop}
\label{prop-SpectralIndicator2}
Suppose that for given $(\kappa,E_c)$ one has $m_c=\dim(\mbox{\rm Ker}(\tilde{L}^{\mathcal{S}}_{\kappa,E_c}))>0$.  Moreover, suppose that there is a $\delta>0$ such the $\kappa'\in[0,\kappa] \mapsto P^\delta_{\kappa',E_c}$ has constant dimension equal to $m_c$. Then the $\zeta$-jump at $E_c$ (or equivalently the spectral flow of the $\mathcal{S}$-reduced spectral localizer at $E_c$) is given by the sum of the $\mathcal{S}$-parities $s_j\in\{-1,1\}$, $j=1,\ldots, m_c$, of the $m_c$ eigenstates of $H$ with energies in the spectral interval $[E_c-\delta,E_c+\delta]$:
%
$$
\lim_{\epsilon\to 0}
\zeta_{\kappa,E_c+\epsilon}^{\mathcal{S}}-\zeta_{\kappa,E_c-\epsilon}^{\mathcal{S}}
=
-\mbox{\rm sig}\big[\mathcal{S}|_{\mbox{\rm\small Ran}(P^\delta_{0,E_c})}\big]
=
-\sum_{j=1}^{m_c}s_j,
$$
%
where $\mathcal{S}|_{\mbox{\rm\small Ran}(P^\delta_{0,E_c})}$ denotes the restriction of the quadratic form $\mathcal{S}$ to the range of the projection $P^\delta_{0,E_c}$.
\end{prop}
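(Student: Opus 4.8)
The plan is to prove the statement by restricting the spectral flow to the $m_c$-dimensional ``small eigenvalue'' sector at $E=E_c$, evaluating the flow there through first-order perturbation in $E$, and then transporting the resulting symmetry signature back to $\kappa'=0$, where it is explicitly computable. Throughout I would work at the fixed energy $E=E_c$ and exploit that, by Kato's analytic perturbation theory together with the constant-dimension hypothesis, the spectral projection $P^\delta_{\kappa',E_c}$ defines a rank-$m_c$ analytic bundle over $\kappa'\in[0,\kappa]$. Since $[0,\kappa]$ is an interval and hence simply connected, this bundle is trivial, so one obtains an analytic family of unitaries $U_{\kappa'}$ with $U_0=\mathbf{1}$ and $U_{\kappa'}\,\mbox{\rm Ran}(P^\delta_{0,E_c})=\mbox{\rm Ran}(P^\delta_{\kappa',E_c})$ intertwining the sectors.

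First I would compute the $\zeta$-jump at the given $\kappa$. Differentiating the definition gives $\partial_E\tilde{L}^{\mathcal{S}}_{\kappa,E}=-\mathcal{S}$, and the hypotheses force $\mbox{\rm Ker}(\tilde{L}^{\mathcal{S}}_{\kappa,E_c})=\mbox{\rm Ran}(P^\delta_{\kappa,E_c})$ (the kernel has dimension $m_c$, lies in the range of $P^\delta_{\kappa,E_c}$ since $0\in(-\delta,\delta)$, and that range also has dimension $m_c$). Kato's analytic reduction then produces $m_c$ analytic eigenvalue branches $\lambda_i(E)$ with $\lambda_i(E_c)=0$ whose derivatives $\lambda_i'(E_c)$ are the eigenvalues of $-A(\kappa)$, where $A(\kappa'):=P^\delta_{\kappa',E_c}\,\mathcal{S}\,P^\delta_{\kappa',E_c}$ is viewed as a Hermitian operator on $\mbox{\rm Ran}(P^\delta_{\kappa',E_c})$. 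As long as each $\lambda_i'(E_c)\neq0$ the crossings are transversal, and reading off $\zeta=\tfrac12\,\mbox{\rm sig}$ yields
$$
\lim_{\epsilon\to0}\big(\zeta^{\mathcal{S}}_{\kappa,E_c+\epsilon}-\zeta^{\mathcal{S}}_{\kappa,E_c-\epsilon}\big)=\sum_i\mbox{\rm sgn}\big(\lambda_i'(E_c)\big)=\mbox{\rm sig}\big(-A(\kappa)\big)=-\,\mbox{\rm sig}\big(A(\kappa)\big).
$$

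Next I would evaluate the endpoint $\kappa'=0$ and close the argument by homotopy. There $\tilde{L}^{\mathcal{S}}_{0,E_c}=(H-E_c\mathbf{1})\mathcal{S}$ commutes with $\mathcal{S}$, so $P^\delta_{0,E_c}$ commutes with $\mathcal{S}$, the subspace $\mbox{\rm Ran}(P^\delta_{0,E_c})$ is $\mathcal{S}$-invariant and is spanned by the $m_c$ joint eigenstates of $H$ and $\mathcal{S}$ with $|E_j-E_c|<\delta$; consequently $A(0)=\mathcal{S}|_{\mbox{\rm\small Ran}(P^\delta_{0,E_c})}$ has eigenvalues $s_j\in\{-1,1\}$ and $\mbox{\rm sig}(A(0))=\sum_{j=1}^{m_c}s_j$. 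Conjugating by $U_{\kappa'}$ turns $\kappa'\mapsto A(\kappa')$ into an analytic path of Hermitian $m_c\times m_c$ matrices, whose signature is constant as long as $A(\kappa')$ remains invertible; this would give $\mbox{\rm sig}(A(\kappa))=\mbox{\rm sig}(A(0))$ and hence the two claimed equalities.

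The main obstacle is exactly the invertibility of $A(\kappa')$ along the entire homotopy, equivalently the transversality $\lambda_i'(E_c)\neq0$ at every crossing. This can fail for large $\kappa$, when an eigenvalue of the compressed symmetry $P^\delta_{\kappa',E_c}\mathcal{S}P^\delta_{\kappa',E_c}$ is driven through zero so that the signature of $A(\kappa')$ can jump; this is precisely where the restriction to small $\kappa$ enters, since for $\kappa$ small the path $A(\kappa')$ stays in a neighborhood of $A(0)$ whose spectrum is $\{-1,1\}$ and therefore never reaches $0$. A secondary technical point is justifying the first-order formula for $\lambda_i'(E_c)$ at a fully degenerate kernel, which is supplied by Kato's analytic reduction of $\tilde{L}^{\mathcal{S}}_{\kappa,E}$ on the invariant subspace rather than by naive nondegenerate perturbation theory.
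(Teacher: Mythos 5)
Your proposal is correct and follows essentially the same route as the paper's proof: the $\zeta$-jump is identified with the spectral flow and computed as $\mbox{\rm sig}\big[\partial_E\tilde{L}^{\mathcal{S}}_{\kappa,E}\big|_{\mbox{\rm Ker}}\big]=-\,\mbox{\rm sig}\big[\mathcal{S}|_{\mbox{\rm Ker}(\tilde{L}^{\mathcal{S}}_{\kappa,E_c})}\big]$, then the signature is transported along the analytic family $\kappa'\mapsto P^\delta_{\kappa',E_c}$ to $\kappa'=0$, where it is evaluated on the $H$-eigenstates as $-\sum_j s_j$. The only cosmetic differences are that you re-derive the crossing-form step via Kato's degenerate first-order perturbation theory where the paper cites the spectral-flow literature, and that you trivialize the projection bundle by unitaries $U_{\kappa'}$ where the paper uses the norm estimate $\|P^\delta_{\kappa,E_c}-P^\delta_{0,E_c}\|\leq C\kappa$ to preserve invertibility of the compressed symmetry.
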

%%%%%%%%%%%%%%%%%%%%%%%%%%%%%

\begin{proof}
Due to \eqref{eq-SF}, the desired $\zeta$-jump is given given by the spectral flow of the $\mathcal{S}$-reduced spectral localizer at $E_c$. Moreover, it is well-known that this spectral flow can be computed via the signature of the so-called crossing form \cite{doll2023spectral}. Together, one gets for $\epsilon$ sufficiently small: 
%
$$
\zeta_{\kappa,E_c+\epsilon}^{\mathcal{S}}-\zeta_{\kappa,E_c-\epsilon}^{\mathcal{S}}
=
\mbox{Sf}\big(E\in[E_c-\epsilon,E_c+\epsilon]\mapsto \tilde{L}^{\mathcal{S}}_{\kappa,E}\big)
=
\mbox{\rm sig}\Big[\partial_E \tilde{L}^{\mathcal{S}}_{\kappa,E}\big|_{\mbox{\rm Ker}(\tilde{L}^{\mathcal{S}}_{\kappa,E_c})}\Big]
.
$$
%
The derivative can readily be read off from the definition of $\tilde{L}^{\mathcal{S}}_{\kappa,E}$:
%
$$
\zeta_{\kappa,E_c+\epsilon}^{\mathcal{S}}-\zeta_{\kappa,E_c-\epsilon}^{\mathcal{S}}
=
\mbox{\rm sig}\Big[-\mathcal{S}\big|_{\mbox{\rm Ker}(\tilde{L}^{\mathcal{S}}_{\kappa,E_c})}\Big]
=
-\,\mbox{\rm sig}\Big[\mathcal{S}\big|_{\mbox{\rm Ker}(\tilde{L}^{\mathcal{S}}_{\kappa,E_c})}\Big]
.
$$
%
Now one has ${\mbox{\rm Ker}(\tilde{L}^{\mathcal{S}}_{\kappa,E_c})}=
\mbox{\rm Ran}(P^\delta_{\kappa,E_c})$. By assumption and analytic perturbation theory, there is a constant such that $\|P^\delta_{\kappa,E_c}-P^\delta_{0,E_c}\|\leq C\kappa$. As the spectrum of $\mathcal{S}$ restricted to $\mbox{\rm Ran}(P^\delta_{0,E_c})$ is contained in $\{-1,1\}$, this implies that also $\mathcal{S}$ restricted to $\mbox{\rm Ran}(P^\delta_{\kappa',E_c})$ is invertible for all $\kappa'\in[0,\kappa]$. Consequently, the signature of these restrictions does not change and one concludes
%
$$
\zeta_{\kappa,E_c+\epsilon}^{\mathcal{S}}-\zeta_{\kappa,E_c-\epsilon}^{\mathcal{S}}
=
-\,\mbox{\rm sig}\Big[\mathcal{S}\big|_{\mbox{\rm Ran}(P_{\kappa,E_c}^\delta)}\Big]
=
-\,\mbox{\rm sig}\Big[\mathcal{S}\big|_{\mbox{\rm Ran}(P_{0,E_c}^\delta)}\Big]
.
$$
%
Finally, $\tilde{L}^{\mathcal{S}}_{0,E_c}=(H-E_c{\bf 1})\mathcal{S}$ and therefore $\mbox{\rm Ran}(P_{0,E_c}^\delta)$ is spanned by the eigenstates $\boldsymbol{\uppsi}_1,\ldots,\boldsymbol{\uppsi}_{m_c}$ of $H$ with energies lying in $[E_c-\delta,E_c+\delta]$. These states have $\mathcal{S}$-parties $s_j\in\{-1,1\}$ given by $\mathcal{S}\boldsymbol{\uppsi}_j=s_j\boldsymbol{\uppsi}_j$ for $j=1,\ldots,m_c$. Replacing this implies the claim.
\end{proof}

%%%%%%%%%%%%%%%%%%%%%%%%%%%%%%%%%%%%%%%%%%%%%%%%%%%%%%%%%%%%%%%%%%%%
\section{Homotopy characterization for finite systems \label{sec:ii}}

This section proves the claim from the main text (see the discussion after Eq.~(4)) that $\zeta^{\mathcal{S}}_{E}$ classifies pairs $(X,H)$ of selfadjoint matrices satisfying $\mathcal{S}H=H\mathcal{S}$ and $\mathcal{S}X=-X\mathcal{S}$. The same argument applies to chiral Hamiltonians $JHJ=-H$ for some selfadjoint involution $J$ commuting with another selfadjoint matrix $X$ if one uses the chiral spectral localizer $\tilde{L}^J(X,H)=(\kappa X+iH)J$ as in \cite{Cheng2023}. Hence let us take a more general set-up. Suppose given a finite-dimensional Hilbert space equipped with a fixed selfadjoint unitary $\Pi$. For two Hermitian matrices $A$ and $B$ on this Hilbert space satisfying with $A\Pi=\Pi A$ and $B\Pi=-\Pi B$, let us use the $\Pi$-reduced spectral localizer
%
\begin{equation*}
\tilde{L}^{\Pi}(A,B)=\left(A+iB\right)\Pi.
\end{equation*}
%

\begin{lem} \label{lem:1D_same_localizer_index} 
Suppose given two matrices $A_{j}$ and $B_{j}$, $j=0,1$, satisfying
%
\begin{equation*}
A_{j}\Pi=\Pi A_{j}\quad\text{ and }\quad B_{j}\Pi=-\Pi B_{j},
\end{equation*}
and further suppose that both associated reduced spectral localizers are gapped:
\begin{equation*}
\min\left(\tilde{L}^{\Pi}(A_{j},B_{j})\right)\geq\delta>0.
\end{equation*}
There exists a continuous path of matrices $t\in[0,1]\mapsto (A_{t},B_{t})$ connecting $(A_0,B_0)$ to $(A_1,B_1)$ and satisfying
%
$$
A_{t}\Pi=\Pi A_{t}\quad\text{ and }\quad B_{t}\Pi=-\Pi B_{t}\quad\text{ and }\quad 
\mu\left(\tilde{L}^{\Pi}\left(A_{t},B_{t}\right)\right)\geq\delta,
$$
%
if, and only if, 
\begin{equation*}
\textup{sig}\left[\tilde{L}^{\Pi}(A_{0},B_{0})\right]=\textup{sig}\left[\tilde{L}^{\Pi}(A_{1},B_{1})\right].
\end{equation*}
\end{lem}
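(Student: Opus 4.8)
The plan is to prove the two implications separately. The necessity of the signature condition is the easy direction and I would dispose of it first; the sufficiency direction is the substantive one and rests on a single structural observation, namely that the map $\Phi:(A,B)\mapsto (A+iB)\Pi=\tilde{L}^{\Pi}(A,B)$ is a real-linear isomorphism from the space of symmetry-admissible pairs onto the space of \emph{all} Hermitian matrices. For the ``only if'' direction, I would observe that along any admissible path $t\mapsto(A_t,B_t)$ the matrices $\tilde{L}^{\Pi}(A_t,B_t)$ are Hermitian (as computed in the excerpt) and depend continuously on $t$, so their eigenvalues vary continuously. The hypothesis $\mu(\tilde{L}^{\Pi}(A_t,B_t))\geq\delta>0$ forbids any eigenvalue from entering $(-\delta,\delta)$, hence none can cross $0$; therefore the numbers of positive and of negative eigenvalues, and thus $\textup{sig}[\tilde{L}^{\Pi}(A_t,B_t)]$, are constant along the path. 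Evaluating at the endpoints yields the equality of the two signatures.

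For the ``if'' direction, the first step is to make $\Phi$ precise. Because $\Pi^2={\bf 1}$, one recovers $(A,B)$ from $L=\tilde{L}^{\Pi}(A,B)$ via $A=\tfrac12(L\Pi+\Pi L)$ and $B=\tfrac1{2i}(L\Pi-\Pi L)$, and a direct check shows that for \emph{any} Hermitian $L$ these formulas produce Hermitian $A,B$ obeying $A\Pi=\Pi A$ and $B\Pi=-\Pi B$. Thus $\Phi$ is a linear bijection with continuous inverse between admissible pairs and Hermitian matrices, under which the gap condition $\mu(\tilde{L}^{\Pi}(A_t,B_t))\geq\delta$ is exactly the condition that the smallest absolute eigenvalue of $L_t=\Phi(A_t,B_t)$ stays $\geq\delta$. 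The constrained homotopy problem therefore collapses to the unconstrained one: connect two Hermitian matrices $L_0,L_1$ of equal signature (hence, since the ambient dimension $n$ is fixed, of equal number $p=\tfrac12(n+\textup{sig})$ of positive eigenvalues) by a path of Hermitian matrices all of whose eigenvalues have modulus $\geq\delta$.

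This last, standard fact I would establish in two moves. First, a gap-preserving spectral flattening: apply functional calculus with the function $g_t$ defined on $\{|x|\geq\delta\}$ by $g_t(x)=(1-t)x+t\,\delta\,\mathrm{sgn}(x)$ to deform each $L_j$ to $\delta(2P_j-{\bf 1})$, where $P_j$ is the spectral projection onto its positive eigenspace; since every eigenvalue then moves along a convex combination of numbers of modulus $\geq\delta$, the gap is preserved throughout. Second, because $P_0$ and $P_1$ are orthogonal projections of the same rank $p$, they lie in the same (connected) Grassmannian $\mathrm{Gr}(p,n)=U(n)/(U(p)\times U(n-p))$, so a path $P_t$ of rank-$p$ projections yields $\delta(2P_t-{\bf 1})$ with all eigenvalues equal to $\pm\delta$. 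Concatenating the two flattenings with this projection path and pulling everything back through $\Phi^{-1}$ produces the required admissible path $(A_t,B_t)$ with gap $\geq\delta$.

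The conceptual crux, and the place where I expect the real work to sit, is the surjectivity of $\Phi$ onto all Hermitian matrices: this is precisely what removes the symmetry constraint and shows there is no obstruction to the deformation beyond the inertia count. The one technical point one must not overlook is that the flattening step is designed to preserve the quantitative gap $\delta$, rather than merely invertibility; granting this, the connectedness of the Grassmannian and the inertia bookkeeping are routine.
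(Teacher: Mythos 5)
Your proof is correct and follows essentially the same route as the paper's: the crux in both is that $(A,B)\mapsto (A+iB)\Pi$ is a real-linear bijection from symmetry-admissible pairs onto \emph{all} Hermitian matrices, with exactly the inversion formulas $A=\tfrac12(L\Pi+\Pi L)$, $B=\tfrac{i}{2}(\Pi L-L\Pi)$, reducing everything to connecting two invertible Hermitian matrices of equal signature through invertibles with gap $\geq\delta$. The only (immaterial) difference lies in that routine sub-step: you flatten the spectrum via $g_t$ and use connectedness of the Grassmannian of rank-$p$ projections, whereas the paper interpolates between ordered diagonalizations and deforms the diagonalizing unitaries via the matrix logarithm --- both variants preserve the quantitative gap $\geq\delta$ as the lemma requires.
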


\begin{proof}
If such a path exists, then the path $\tilde{L}^{\Pi}(A_{t},B_{t})$
is a continuous path of invertible Hermitian matrices, along which the signature cannot change. The key to proving the converse is to use the formulas that take us
from $\tilde{L}^{\Pi}(A,B)$ back to  $A$ and $B$. These are
%
\begin{align*}
A  =\frac{1}{2}\left(\left(\tilde{L}^{\Pi}(A,B)\Pi\right)^{\dagger}+\tilde{L}^{\Pi}(A,B)\Pi\right),
\qquad
B  =\frac{i}{2}\left(\left(\tilde{L}^{\Pi}(A,B)\Pi\right)^{\dagger}-\tilde{L}^{\Pi}(A,B)\Pi\right).
\end{align*}
%
The matrices $\tilde{L}^{\Pi}(A_{0},B_{0})$ and $\tilde{L}^{\Pi}(A_{1},B_{1})$ are by assumption Hermitian, gapped and have the same signature. By a straightforward argument based on the spectral theorem (diagonalize with diagonal entries ordered to their size, linearly interpolate between the diagonal matrices, finally deform the unitaries using their matrix logarithm), there exists a continuous path $t\in[0,1]\mapsto L_t$ of invertible selfadjoints connecting $L_0=\tilde{L}^{\Pi}(A_{0},B_{0})$ to $L_1=\tilde{L}^{\Pi}(A_{1},B_{1})$. Then set
%
$$
A_{t}  =\frac{1}{2}\left(\left(L_{t}\Pi\right)^{\dagger}+L_{t}\Pi\right),\qquad
B_{t}  =\frac{i}{2}\left(\left(L_{t}\Pi\right)^{\dagger}-L_{t}\Pi\right).
$$
%
These are clearly continuous paths of Hermitian matrices. Also, as required
%
\begin{equation*}
2\Pi A_{t}=L_{t}+\Pi L_{t}\Pi=2A_{t}\Pi
,
\qquad
-2i\Pi B_{t}=L_{t}-\Pi L_{t}\Pi=2iB_{t}\Pi.
\end{equation*}
%
Finally
\begin{equation*}
\tilde{L}^{\Pi}(A_{t},B_{t})=\frac{1}{2}\left(\left(\Pi L_{t}+L_{t}\Pi\right)\Pi-\left(\Pi L_{t}-L_{t}\Pi\right)\Pi\right)=L_{t}
\end{equation*}
%
so the path is in the invertibles as required.
\end{proof}

\vspace{.2cm}

While the previous lemma provides the desired classification, the result can probably be strengthened. It should be possible that along the constructed path $t\mapsto (A_t,B_t)$ the locality of the operators can be preserved as physical intuition suggests. One way to approach the problem is to realize that the commutator $[A,B]$ of the two Hermitian matrices can be made arbitrarily small by choosing the tuning parameter $\kappa$ small (recall that $A=H$, $B=\kappa X$ and $\Pi=\mathcal{S}$ in the situation of the main text). Then Lin's theorem \cite{LinAlmostCommutingHermitian1994} states that there are nearby Hermitian operators $A'$ and $B'$ which commute. This allows to construct a short part connecting them which hence conserves locality, as pointed out in \cite{hastings2008topology}.  It is an interesting open question whether one can generalize Lin's theorem by guaranteeing that $A'$ and $B'$ satisfy the same symmetry relations $A'\Pi=\Pi A'$ and $B'\Pi=-\Pi B'$. We conjecture that this is true, given that similar statements have been proven for matrix symmetries corresponding to Altland-Zirnbauer symmetry classes AI and AII \cite{loring2013almost} as well as  classes C and D \cite{loring2016almost}.

%%%%%%%%%%%%%%%%%%%%%%%%%%%%%%%%%%%%%%%%%%%%%%%%%%%%%%%%%%%%%%%%%
\section{The scaling coefficient $\kappa$ \label{sec:kappa}}

The scaling coefficient $\kappa$ serves two roles in the spectral localizer: it guarantees consistent units between the two constituent operators, the position operators on one side and the Hamiltonian on the other, and it adjusts the effective weighting between these operators in the spectral localizer's spectrum. The tuning of $\kappa$ is crucial because the two limiting cases $\kappa\downarrow 0$ and $\kappa\uparrow\infty$ both yield objects of no interest. For $\kappa = 0$, the spectrum of $L_{(x,E)}$ is simply related to the spectrum of the underlying Hamiltonian, and thus the spectral localizer provides no additional information. Similarly, for $\kappa \uparrow \infty$, $L_{(x,E)}$ only reveals the distribution of the system's sites. Instead, for the spectral localizer to provide new and useful information about a system, $\kappa$ needs to be chosen to neither be too large nor too small. These comments directly transpose the symmetry-reduced spectral localizer $\tilde{L}_E^{\mathcal{S}}$ due to Eqs.\ \eqref{eq-LLtilde} and \eqref{eq-EigenRel}.

\vspace{.2cm}

The spectral localizer in several prior works \cite{LoringSchuBa_odd,LoringSchuBa_even,doll2021skew} is used to compute the bulk topological invariants for bulk tight-binding Hamiltonians $H_{\textrm{bulk}}$ that fall within one of the ten Altland-Zirnbauer classes \cite{schnyder2008,kitaev2009,ryu2010topological}. In this context, there are proven bounds on $\kappa$ that guarantee that the spectral localizer is gapped (i.e.\ $\mu \ne 0$) so that its signature is a well-defined and stable quantity. More precisely,  let $E$ be the energy where the system's topology is being evaluated, let $g$ be the size of the band gap of $H_{\textrm{bulk}}$ around that $E$, and $l$ the length from the center of the sample to the boundary of the finite system.  Then the bounds
%
\begin{align}
    \kappa & \le \frac{g^3}{12 \Vert H_{\textrm{bulk}} - E {\bf 1} \Vert \left(\sum_{j=1}^d\Vert [X_j,H_{\textrm{bulk}}] \Vert \right)}, \label{eq:bnd1} \\
    \kappa & \ge \frac{2g}{l}, \label{eq:bnd2}
\end{align}
%
guarantee that the gap of the spectral localizer $L_{(0,E)}$ around $0$ is at least $\frac{g}{2}$ (see Theorem 2 of Ref. \cite{LoringSchuBa_even} and Chapter~10 in \cite{doll2023spectral}). Due to Eq.\ \eqref{eq-LLtilde} this again transposes directly to the symmetry-reduced spectral localizer.  Let us stress that the result makes no other assumption on the Hamiltonian $H_{\textrm{bulk}}$ than the existence of a spectral gap. The Hamiltonian need not be periodic or otherwise be homogeneous in space (such as in a quasicrystal). In particular, the result also holds for Hamiltonians describing defects. 
However, if a Hamiltonian describing a defect has a boundary state at energy $E$, then the spectral gap $g$ vanishes at that energy and necessarily the gap of spectral localizer also vanishes. 
Hence, in this work, there is a shift of perspective on the spectral localizer (or rather its symmetry-reduced cousin) relative to these prior studies \cite{LoringSchuBa_odd,LoringSchuBa_even,doll2021skew}: it is not only used to detect (bulk) topological invariants, but also to localize topological bound states. In the following, we argue that for the purpose of detecting these boundary-localized states, an adequate choice for the size of $\kappa$ is nevertheless given by Eq.\ \eqref{eq:bnd2} if $g$ is chosen to be the bulk gap. 

\vspace{.2cm}

Before going on though, let us stress that the bounds on the gap of the spectral localizer are not available for unbounded Hamiltonians, such as for photonic systems where Maxwell's equation leads to an unbounded Hamiltonian. In particular, the hypothesis \eqref{eq:bnd1} is meaningless for unbounded Hamiltonians  where $\Vert H_{\textrm{bulk}} \Vert \rightarrow \infty$. To recover a useful bound this quantity needs to be made finite in some way. One possibility is to use the resolvent $\Vert H_{\textrm{bulk}} \Vert \rightarrow \Vert (H_{\textrm{bulk}} - E{\bf 1})^{-1} H_{\textrm{bulk}} \Vert$ (which would still need to have its units corrected in some appropriate manner). Another possibility is to project into the local-in-energy subspace $\Vert H_{\textrm{bulk}} \Vert \rightarrow \Vert \Psi^\dagger H_{\textrm{bulk}} \Psi \Vert $ where $\Psi$ is rectangular matrix whose $m$ columns are the eigenvectors of the $m$ eigenvalues of $H_{\textrm{bulk}}$ closest to the chosen $E$ \cite{dixon_classifying_2023}. However, at present, it is not known how either of these changes will alter the bounds, or if there is an entirely different, better approach. 

\vspace{.2cm}

Nevertheless, Eq.\ \eqref{eq:bnd2} is expressed in terms of quantities that can still be calculated for a system with an unbounded Hamiltonian. This connection \eqref{eq:bnd2} is used as a guiding principle for the choice of $\kappa$ with $g$ being the bulk gap (and {\it not} the energetic distance from $E$ to the boundary-localized state). For the system shown in Fig.\ 1 of the main text, the bulk gap is $g \approx 0.068 (2 \pi c)^2/a^2$ and the length of the system in $y$ is $l = 8.65 a$, yielding $\kappa \approx 0.016 (2 \pi c)^2/a^3$. As can be seen, this is very similar to the value of $\kappa = 0.010 (2 \pi c)^2/a^3$ used in those simulations.

\begin{figure*}[t]
    \centering
    \includegraphics{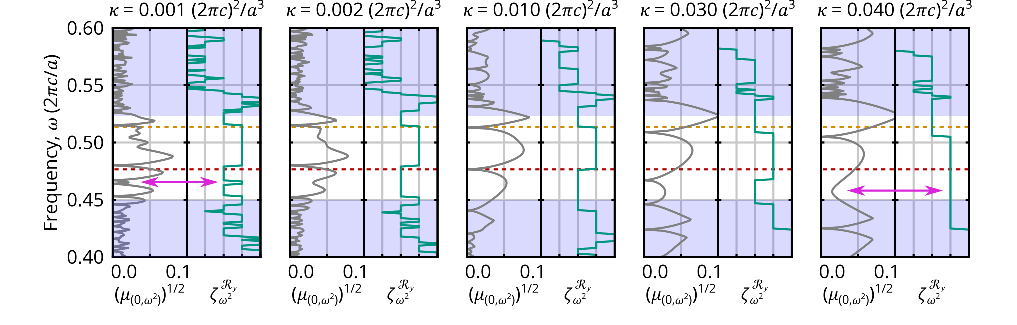}
    \caption{Local gap $(\mu_{(0,\omega^2)})^{(1/2)}$ in units of $2\pi c/a$ (left panels) and local index $\zeta_{\omega^2}^{\mathcal{R}_y}$ (right panels) calculated using the value of $\kappa$ indicated above each pair of panels for the system shown in Fig.\ 1 of the main text. The middle panel reproduces the data from Fig.\ 1(c) and (d). The shaded regions demarcate those spectral regions where states can exist in the crystalline bulk. For $\kappa = 0.001 (2\pi c)^2/a^3$, the magenta arrow indicates where the local index is changing for a pair of edge states. For $\kappa = 0.040 (2\pi c)^2/a^3$, the magenta arrow indicates where a corner state is no longer associated with a change in the local index.}
    \label{fig:kappa_check}
\end{figure*}

\vspace{.2cm}

By direct calculation, we confirm that $\kappa$ can be varied by over an order of magnitude while yielding quantitatively similar results. In Fig.\ \ref{fig:kappa_check}, we show the local gap and local index for a range of $\kappa$ surrounding the value predicted by Eq.\ \eqref{eq:bnd2}. The three central panels, $\kappa = [0.002,\; 0.010,\; 0.030] (2\pi c)^2/a^3$ all show nearly the same quantitative results, with two topological corner modes within the bulk band gap identified by a closing in the local gap where the local index also changes. The leftmost ($\kappa = 0.001 (2 \pi c)^2/a^3$) and rightmost ($\kappa = 0.040 (2 \pi c)^2/a^3$) panels both show how the theory begins to break down. For too small a $\kappa$, too many states are identified as topological in the system; in this case, two nearly degenerate edge-localized modes are identified as topological. Similarly, when $\kappa$ is too large, one of the in-gap topological corner states is effectively combined with an in-band state with the opposite topological charge.

\vspace{.2cm}

Heuristically, the behavior seen in Fig.\ \ref{fig:kappa_check} is demonstrating that the effect of $\kappa$ is to smooth out pairs of nearly degenerate modes with opposite symmetry, removing their presence from the local gap and local index. In this case, the in-gap nearly degenerate modes are made of edge-localized states that form an even and odd pair with respect to the reflection symmetry at the corner, whose frequencies are guaranteed to be similar, but distinct. However, if $\kappa$ is too large, some of the corner-localized modes are smoothed out with their in-band even/odd partner. This argument confirms the prediction that $\kappa$ should be both related to the gap size, which determines how far away such an in-band even/odd partner state is in energy, as well as the finite system's size, which sets the energy scale of the even-odd splitting between the edge-localized states (larger $l$ reduces the edge state splitting).

%%%%%%%%%%%%%%%%%%%%%%%%%%%%%%%%%%%%%%%%%%%%%%%%%%%%%%%%%%%%%%%%%
\section{Additional details on the perturbed corner heterostructure}

In Fig.\ 1 of the main text, the local gap is seen to indicate that the higher-frequency corner-localized state has more topological protection than the lower-frequency corner state, despite being closer-in-frequency to one of the bulk bands. Here, we provide additional simulation results beyond those shown in Fig.\ 2 of the main text to support the conclusions discussed in conjunction with that figure.

\vspace{.2cm}

\begin{figure*}[t]
    \centering
    \includegraphics{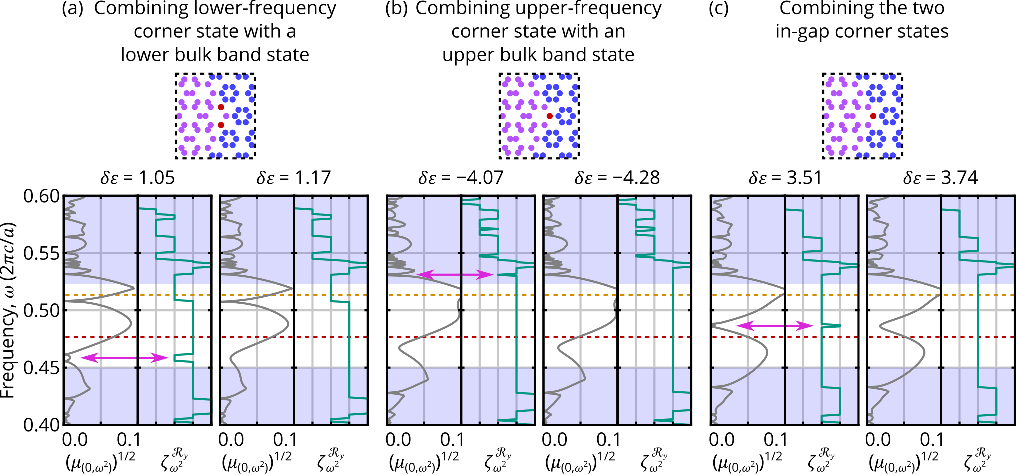}
    \caption{Studies of three different perturbations designed to annihilate one or both of the original in-gap topological corner states by combining the state with an opposite-symmetry partner. The top row shows zoomed-in diagrams of the perturbation's distribution (red) in the overall photonic corner heterostructure from Fig.\ 1 in the main text. The bottom left panel for each case shows the local gap $(\mu_{(0,\omega^2)})^{(1/2)}$ in units of $2\pi c/a$ and local index $\zeta_{\omega^2}^{\mathcal{R}_y}$ calculated using $\kappa = 0.01 (2\pi c)^2/a^3$ for a perturbation strength just below the strength at which the topological corner mode is annihilated. The bottom right panel shows the same quantities for a perturbation strength just above the strength at which the topological corner mode is annihilated. The magenta arrow indicates the topological feature that disappears as the perturbation strength is increased. The shaded regions demarcate those spectral regions where states can exist in the crystalline bulk. (a) The perturbation is chosen to combine: (a) the lower-frequency corner state with a state from the lower-frequency bulk band; (b) the higher-frequency corner state with a state from the higher-frequency bulk band; (c) the two in-gap corner-localized states.}
    \label{fig:perturb_sm}
\end{figure*}

In Fig.\ \ref{fig:perturb_sm} we consider the same two perturbation distributions as considered in the main text, tailored to affect either the lower-frequency corner state [Fig.\ \ref{fig:perturb_sm}(a)] or the higher-frequency corner state [Figs.\ \ref{fig:perturb_sm}(b),(c)], with the field profiles of these states in the unperturbed system shown in Fig.\ 1(e) of the main text. The total perturbation is then the perturbation strength $\delta \varepsilon\in\mathbb{R}$ multiplied by the perturbation distribution $u(\mathbf{x})$, i.e.\ $\delta \varepsilon_{\textrm{tot}}(\mathbf{x}) = \delta \varepsilon \, u(\mathbf{x})$. The two chosen perturbation distributions have similar overlaps with their respective modes, the higher-frequency corner state modal overlap is $\int E_{z,\textrm{hf}}^*(\mathbf{x}) u_{\textrm{hf}}(\mathbf{x}) E_{z,\textrm{hf}}(\mathbf{x}) d\mathbf{x} = 0.07 = \bar{u}_{\textrm{hf}}$, while the lower-frequency corner state modal overlap is $\int E_{z,\textrm{lf}}^*(\mathbf{x}) u_{\textrm{lf}}(\mathbf{x}) E_{z,\textrm{lf}}(\mathbf{x}) d\mathbf{x} = 0.12 = \bar{u}_{\textrm{lf}}$. Here, $u_{\textrm{lf}}(\mathbf{x})$ is shown in Fig.\ \ref{fig:perturb_sm}(a), and $u_{\textrm{hf}}(\mathbf{x})$ in Figs.\ \ref{fig:perturb_sm}(b) or (c), while the modal fields $E_z(\mathbf{x})$ of bound states of the unperturbed system are shown in Fig.\ 1(e) of the main text.

\vspace{.2cm}

To demonstrate that the topological phase transitions are happening at the perturbation strengths claimed in the main text, in Fig.\ \ref{fig:perturb_sm} we show closely spaced perturbation strengths on either side of the topological transition. In each case, the relevant topological that disappears as the perturbation strength is slightly increased is marked by the magenta arrow. In Fig.\ \ref{fig:perturb_sm}(c), we also show the possibility of attempting to merge the two in-gap corner states, as opposed to combining them with states from the bulk bands as is considered in Figs.\ \ref{fig:perturb_sm}(a) and (b). From Fig.\ 1(d), the local gap reveals that the topological protection of the lower-frequency corner state against merging with the lower-frequency bulk band is less than that of merging the two in-gap states together, which is again less than that of merging the higher-frequency corner state with the higher-frequency bulk band. This is because the maximum local gap attained over frequencies below the lower-frequency corner mode is less than the maximum local gap for frequencies in between the two corner states, which is again less than the local gap for frequencies greater than the higher-frequency corner state, 
\begin{equation*}
\max_{\omega < \omega_{\textrm{lf}}} \mu_{(0,\omega^2)} < \max_{\omega_{\textrm{lf}} < \omega <\omega_{\textrm{uf}}} \mu_{(0,\omega^2)} < \max_{\omega_{\textrm{uf}} < \omega} \mu_{(0,\omega^2)}.
\end{equation*}
This predicted hierarchy is confirmed by these the simulations considered in Fig.\ \ref{fig:perturb_sm}, 
\begin{equation*}
    |\delta \varepsilon_{\textrm{Fig.\ref{fig:perturb_sm}(a)}}| \bar{u}_{\textrm{lf}} < |\delta \varepsilon_{\textrm{Fig.\ref{fig:perturb_sm}(c)}}| \bar{u}_{\textrm{uf}} < |\delta \varepsilon_{\textrm{Fig.\ref{fig:perturb_sm}(b)}}| \bar{u}_{\textrm{uf}},
\end{equation*}
where the absolute value of each perturbation strength necessary to annihilate a specified corner-localized mode is scaled by the overlap with that corner-localized mode.

\vspace{.2cm}

In the numerical results discussed in this section and the main text, the perturbation was chosen to respect the system's reflection symmetry so that the symmetry-reduced spectral localizer remained Hermitian and the topological corner-localized states retained well-defined eigenvalues with respect to $\mathcal{R}_y$. However, it is likely possible to also consider the system's topological protection against perturbations that do not respect reflection symmetry. Of course, in this case, the corner-localized topological state would no longer be either even or odd with respect to the reflection symmetry, but it may be possible to quantitatively predict a perturbation strength below which the state must still exist. Such a prediction may be possible using known results for symmetry-destroying perturbations to systems with strong topology \cite{DollShcuba_Approx_symm_top_ins}, or by considering a non-Hermitian extension of the symmetry-reduced spectral localizer similar to what has been developed for Class A systems \cite{cerjan2023spectral}.

%%%%%%%%%%%%%%%%%%%%%%%%%%%%%%%%%%%%%%%%%%%%%%%%%%%%%%%%%%%%%%%%%
\section{Identifying topological edge-localized states}

\begin{figure*}[t]
    \centering
    \includegraphics{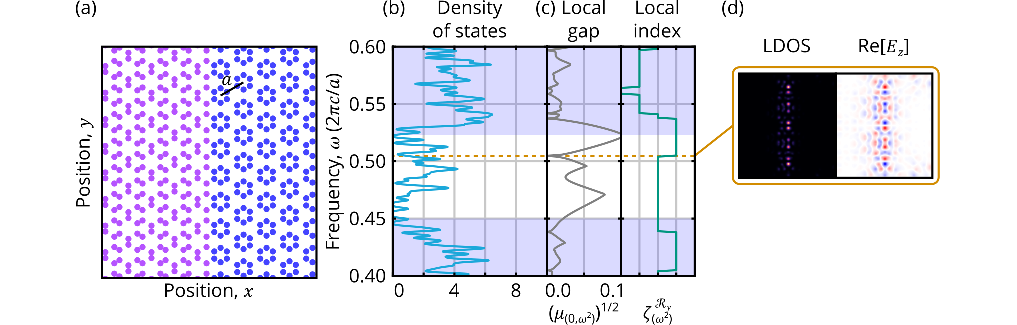}
    \caption{(a) Diagram of a 2D photonic structure with a straight edge between crystals formed from expanded (purple) and contracted (blue) hexagonal clusters that have the same properties as in Fig.\ 1 of the main text, bounded by a perfect electric conductor. (b) Density of states for the finite system in (a). (c) Local gap $(\mu_{(0,\omega^2)})^{(1/2)}$ in units of $2\pi c/a$ and local index $\zeta_{\omega^2}^{\mathcal{R}_y}$ calculated using $\kappa = 0.01(2\pi c)^2/a^3$. In (b,c) the shaded regions demarcate those spectral regions where states can exist in the crystalline bulk. (d) LDOS at the frequency of the local gap closing and real part of the $E_z$ field for the mode at $\omega = 0.504(2\pi c/a)$.}
    \label{fig:edge}
\end{figure*}

Re-configuring the corner photonic heterostructure considered in Fig.\ 1 of the main text so that the mirror symmetry axis intersects an edge rather than a corner, shows that the same local marker and local gap definitions can be used to identify topological edge-localized states centered at $y=0$, see Fig.\ \ref{fig:edge}. Indeed, as can be seen, it would be nearly impossible to uniquely identify the topological edge-localized state in this system from its DOS, due to all of the surrounding nearly degenerate states, Fig.\ \ref{fig:edge}(b). However, only a single state causes the local gap to close within the heterostructure's bulk band gap, where the topological marker also changes, Fig.\ \ref{fig:edge}(c), and this frequency corresponds to a state that is edge-localized, Fig.\ \ref{fig:edge}(d). Note that due to the system possessing perfect electric conductor boundaries (open boundaries in the standard language of condensed matter systems), one only expects a small number of edge states to be topological with respect to the reflection symmetry of the system's $y=0$ axis, as opposed to a number proportional to the length of the edge. To identify the remainder of the edge-localized states, one could instead impose periodic boundaries and define a set of reflection symmetries corresponding to the center and edges of each horizontal ribbon super-cell, similar to what is considered in Ref.\ \cite{velury_topological_2021}. Then, the set of all topological edge-localized states would be the union of all those states identified by each choice of reflection symmetry. Altogether, these simulations demonstrate that the local crystalline topological marker $\zeta_{E}^{\mathcal{S}}$ is applicable to both first-order and higher-order topology.

%%%%%%%%%%%%%%%%%%%%%%%%%%%%%%%%%%%%%%%%%%%%%%%%%%%%%%%%%%%%%%%%%
\section{Eigenstates of the symmetry-reduced  spectral localizer}

\begin{figure*}[t]
    \centering
    \includegraphics{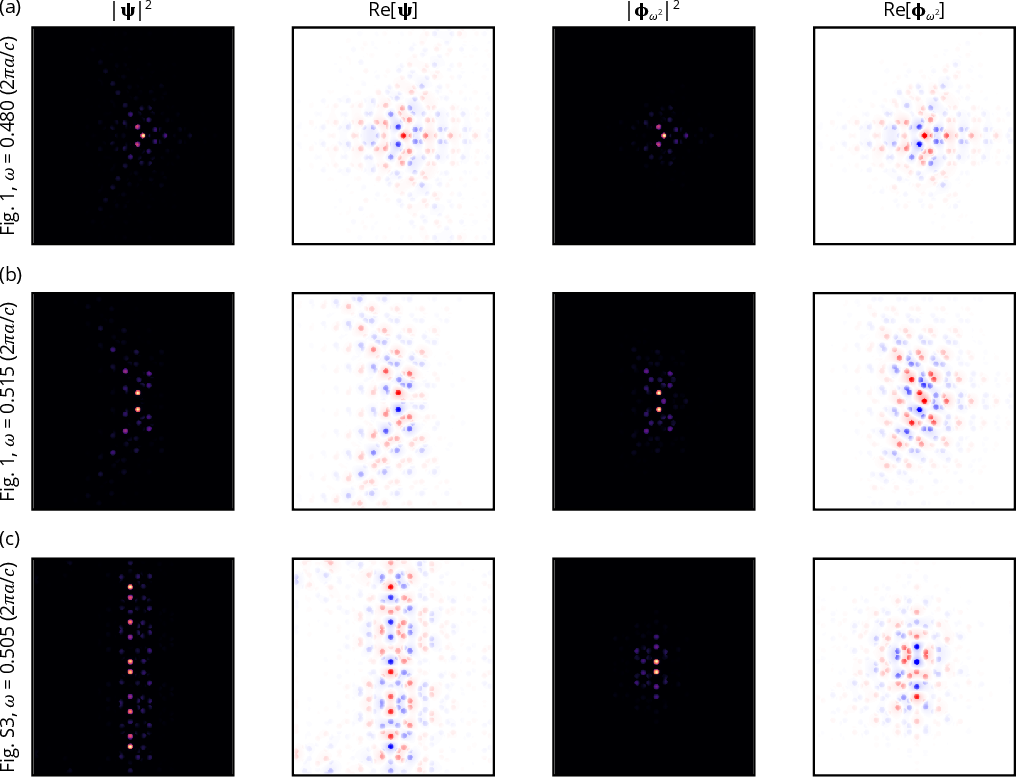}
    \caption{(a) For the system shown in the main text in Fig.\ 1, from left to right, the eigenstate intensity $|\boldsymbol{\uppsi}|^2$ of $H$ whose frequency is $\omega = 0.480(2\pi c/a)$, the corresponding real part of the eigenstate $\textrm{Re}[\boldsymbol{\uppsi}]$, the squared absolute value $|\boldsymbol{\upphi}_{\omega^2}|^2$ of the eigenvector $\boldsymbol{\upphi}_{\omega^2}$ of the $\mathcal{R}_y$-reduced spectral localizer whose corresponding eigenvalue is the closest to zero, and $\textrm{Re}[\boldsymbol{\upphi}_{\omega^2}]$.
    (b) Similar to (a), but for $\omega = 0.515(2\pi c/a)$ from that same system.
    (c) Similar to (a), but for $\omega = 0.505(2\pi c/a)$ for the system shown in Fig.\ \ref{fig:edge} in the main text. In all cases, $\kappa = 0.01(2\pi c)^2/a^3$.}
    \label{fig:loc_evecs}
\end{figure*}

In the main text after Eq.\ (10), it is claimed that the eigenstates of the $\mathcal{R}_y$-reduced spectral localizer at gap-closing are approximately given by eigenstates of the underlying Hamiltonian. In this section, we provide some numerical justification for this claim (as well as the discussion in Sec.\ \ref{sec:smi}), but also note that this is still a relatively crude approximation that somehow still results in a prediction, Eq.\ (12), that is numerically observed to be correct in Fig.\ 1 in the main text and Fig.\ \ref{fig:edge}. As a reminder, we are using $\boldsymbol{\upphi}_{\omega^2}$ as the eigenvector of the symmetry-reduced spectral localizer $\tilde{L}^{\mathcal{R}_y}_{\omega^2}$ that corresponds to its smallest eigenvalue at $\omega$. Similarly, the eigenvectors of the Hermitian Hamiltonian generated from Maxwell's equations are $\boldsymbol{\uppsi} = M^{-1/2} E_z$.

\vspace{.2cm}

As can be seen in Fig.\ \ref{fig:loc_evecs}(a),(b), $\boldsymbol{\upphi}_{\omega^2}$ is in reasonable agreement with the corresponding eigenstate of $H$, $\boldsymbol{\uppsi}_\textrm{b}$ for these corner-localized states. However, note that, as the symmetry-reduced spectral localizer does not commute with the spatial symmetry $\mathcal{S}$, $\boldsymbol{\upphi}_{\omega^2}$ is neither even nor odd with respect to this symmetry. For Fig.\ \ref{fig:loc_evecs}(c), we can see that even though the system's eigenstate is localized to the interface, but still extended along it, $\boldsymbol{\upphi}_{\omega^2}$ is localized to the reflection operator's center, i.e., the $y$-axis. Thus, the spectral localizer can identify topological states even if they are not perfectly localized.

%%%%%%%%%%%%%%%%%%%%%%%%%%%%%%%%%%%%%%%%%%%%%%%%%%%%%%%%%%%%%%%%%
\section{Examples of edge-localized states in the corner heterostructure that are trivial with respect to the local index \label{sec:edge_states}}

In the discussion about Fig.\ 1 in the main text, we claim that the other states seen in the shared bulk band gap of the photonic crystal heterostructure are edge-localized states and not corner-localized states. In Fig.\ \ref{fig:degen_states} we provide evidence for this claim. As can be seen, for frequencies corresponding to choices where the DOS is non-zero within the bulk band gap, but not frequencies where the local gap closes and the local index changes, the states are edge-localized, not corner-localized.

\begin{figure*}[h]
    \centering
    \includegraphics{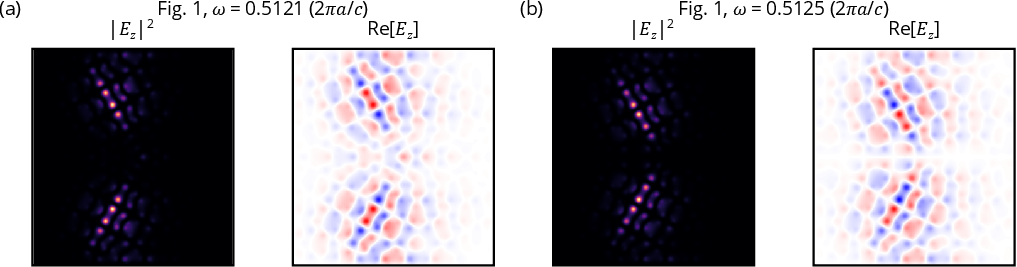}
    \caption{For the system shown in the main text in Fig.\ 1, $|E_z|^2$ and $\textrm{Re}[E_z]$ at (a)  $\omega = 0.5121(2\pi c/a)$ and at (b) $\omega = 0.5125(2\pi c/a)$.}
    \label{fig:degen_states}
\end{figure*}

%%%%%%%%%%%%%%%%%%%%%%%%%%%%%%%%%%%%%%%%%%%%%%%%%%%%%%%%%%%%%%%%%
\section{Application of the symmetry-reduced spectral localizer to a tight-binding model \label{sec:c6v}}

Here, we provide an example of the symmetry-reduced spectral localizer of a tight-binding model that identifies its crystalline topological states. In particular, we choose the ``breathing'' honeycomb lattice with $C_{6v}$ symmetry shown in Fig.\ \ref{fig:c6v}(a). This lattice is characterized by two coupling coefficients, the intra--unit cell coupling $t_{\textrm{in}}$, and the inter--unit cell coupling $t_{\textrm{out}}$. Here, we set the on-site energies to be zero, so the system also exhibits chiral symmetry. This lattice has been previously studied for its zero-energy states  protected by chiral symmetry \cite{noh_topological_2018}.

\begin{figure*}
    \centering
    \includegraphics{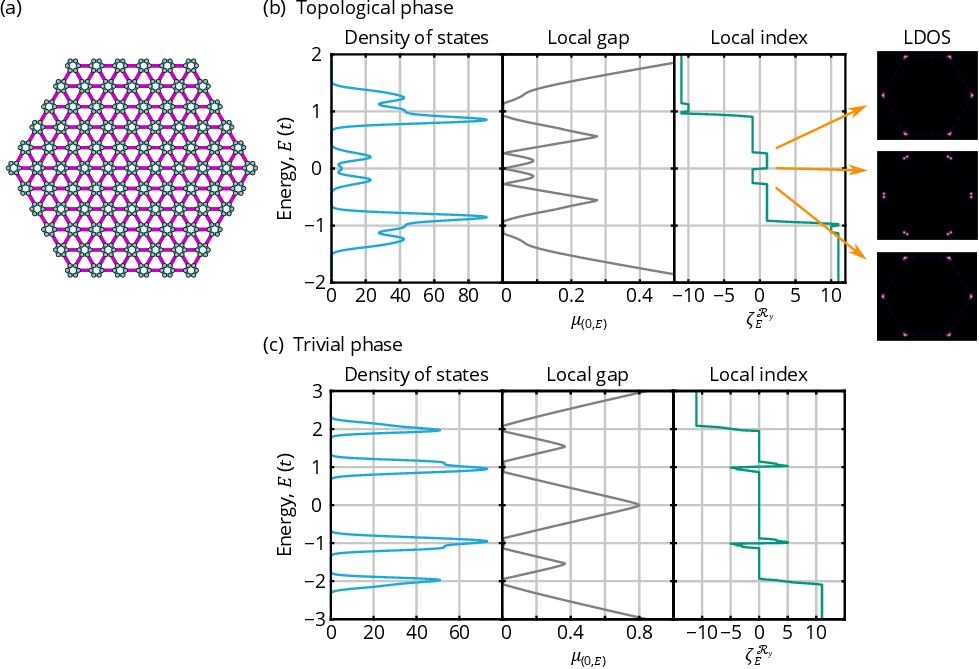}
    \caption{(a) Schematic of the finite breathing honeycomb lattice with $C_{6v}$ symmetry. The intra--unit cell couplings ($t_{\textrm{in}}$) are shown in cyan, and the inter--unit cell couplings ($t_{\textrm{out}}$) are shown in magenta, with lattice constant is $a$. (b) Density of states, local gap $\mu_{(0,E)}$, and local index $\zeta_{E}^{\mathcal{R}_y}$ for the breathing honeycomb lattice for $t_{\textrm{in}} = 0.2t$ and $t_{\textrm{out}} = t$. The local density of states (LDOS) is shown for the three in-gap energies where the topological index is seen to change. (c) Similar to (b), except for $t_{\textrm{in}} = t$ and $t_{\textrm{out}} = 0.2t$. For (b),(c), $\kappa = 0.1 (t/a)$.}
    \label{fig:c6v}
\end{figure*}

\vspace{.2cm}

Instead, here we analyze the breathing honeycomb lattice using the symmetry-reduced spectral localizer. Again, we choose $\mathcal{S} = \mathcal{R}_y$, the reflection symmetry about the system's $y=0$ axis, yielding the local index $\zeta_{E}^{\mathcal{R}_y}$ and local gap $\mu_{(0,E)}$ as defined in the main text. When, $t_{\textrm{out}} > t_{\textrm{in}}$ [Fig.\ \ref{fig:c6v}(b)], the system exhibits three in-gap index switches, showing that the system's topology is changing at these energies. The local density of states (LDOS) reveals that all three of these switches correspond to corner-localized states. In contrast, when $t_{\textrm{in}} > t_{\textrm{out}}$ [Fig.\ \ref{fig:c6v}(c)], there are no in-gap topological index changes, only changes that occur within the spectral extent of the system's bulk bands.

\vspace{.2cm}

Note that the system's local index in the topological phase only changes by $2$ at the energy of each set of corner states, not $6$, despite there being six corners. This is because only two of the corner states are topological with respect to $\mathcal{R}_y$ --- the other four corner states are topological with respect to the two reflection axes that bisect those corners. Similarly, the index is changing by $2$, rather than $1$, because each reflection axis is bisecting two corners on opposite sides of the lattice. In contrast, in Fig.\ 1 of the main text, the system only has a single corner along the reflection symmetry's axis.

%\bibliography{loc_for_crys_sym}

%apsrev4-2.bst 2019-01-14 (MD) hand-edited version of apsrev4-1.bst
%Control: key (0)
%Control: author (8) initials jnrlst
%Control: editor formatted (1) identically to author
%Control: production of article title (0) allowed
%Control: page (0) single
%Control: year (1) truncated
%Control: production of eprint (0) enabled
%